
\RequirePackage{ifthen}

\newcommand{\typof}{1} %

\newcommand{\longv}[1]{\ifthenelse{\equal{\typof}{0}}{}{#1}}
\newcommand{\shortv}[1]{\ifthenelse{\equal{\typof}{0}}{#1}{}}

\newcommand{\longshortv}[2]{\ifthenelse{\equal{\typof}{0}}{#2}{#1}}
\newcommand{\drop}[0]{\ifthenelse{\equal{\typof}{0}}{}{}}

\shortv{
\documentclass[conference]{IEEEtran}}
\longv{
\documentclass[a4paper, 10pt]{article}
}

%

%

\shortv{

%
\ifCLASSINFOpdf
\else
\fi
}

\usepackage[utf8]{inputenc} 
\usepackage{stmaryrd}

\usepackage{amsmath}
\usepackage{amsthm}
\usepackage{amssymb}

\usepackage{hyperref}

\usepackage{caption}
\usepackage{subcaption}

\usepackage{xcolor}

\usepackage{cmll}

\usepackage{tikz}
\usetikzlibrary{cd}

\usepackage{bussproofs}
\EnableBpAbbreviations

\usepackage{graphicx}
\usepackage{adjustbox}

\usepackage{listings}

\longv{
\setlength\voffset{-1in}
\setlength\topmargin{0.5cm}
\setlength\headheight{1cm}
\setlength\headsep{0.5cm}
\setlength\textheight{24.7cm}
\setlength\footskip{0.5cm}
\setlength\hoffset{-1in}
\setlength\oddsidemargin{3cm}
\setlength\textwidth{15cm}
}



\newcommand{\TT}[1]{\mathtt{#1}}

\newcommand{\C}[1]{\mathcal{#1}}
\newcommand{\BB}[1]{\mathbb{#1}}

\newcommand{\To}{\Rightarrow}

\newcommand{\Real}{\mathsf{Real}}

\newcommand{\ff}[1]{#1_{\mathsf{fin}}}

\newcommand{\Lip}{\mathsf{Lip}}

\newcommand{\DDer}{\mathsf{D}}

\newcommand{\Met}{\mathsf{Met}}
\newcommand{\QQ}{\mathbf{Q}}
\newcommand{\QQR}{\mathbf{Q}^{\mathsf{r}}}
\newcommand{\QQS}{\mathbf{Q}^{\mathsf{s}}}
\newcommand{\QQRS}{\mathbf{Q}^{\mathsf{rs}}}
\newcommand{\QQU}{\mathbf{Q}_{\land}^{\mathsf s}}
\newcommand{\QQRU}{\mathbf{Q}^{\mathsf{rs}}_{\land}}

\newcommand{\Res}{\multimapinv}
\newcommand{\ResH}{\Leftarrow}

\newcommand{\COV}[3]{#1 \stackrel{#2}{\longrightarrow} #3}

\newcommand{\STLC}{\mathsf{ST\lambda C}}

\newcommand{\VPM}{\mathbf{V}}

\newcommand{\ELLE}{\mathbf{L}}
\newcommand{\LLE}{\mathbf{LL}}

\newcommand{\HLLip}{\mathbf{LL}^{*}}

\newcommand{\model}[1]{\llbracket #1\rrbracket}

\newcommand{\nudel}[1]{\llparenthesis #1\rrparenthesis}

\newcommand{\fini}[1]{\llparenthesis #1\rrparenthesis_{\mathrm{fin}}}

\newcommand{\ev}{\mathsf{ev}}

\definecolor{color0}{HTML}{4682B4}


\newtheorem{example}{Example}[section]
\newtheorem{definition}{Definition}[section]

\newtheorem{remark}{Remark}[section]
\newtheorem{theorem}{Theorem}[section]

\newtheorem{lemma}[theorem]{Lemma}
\newtheorem{proposition}[theorem]{Proposition}
\newtheorem{corollary}{Corollary}[section]

%


\hyphenation{op-tical net-works semi-conduc-tor}

\longv{\title{
On Generalized Metric Spaces for \\ the Simply Typed Lambda-Calculus\thanks{
This work has been funded by the ERC CoG 818616 “DIAPASoN”.
}
\\ (Extended Version)}

\author{Paolo Pistone\\
Universit\`a di Bologna\\
\texttt{paolo.pistone2@unibo.it}
}

\date{}

}

\begin{document}
\longv{\maketitle}

\shortv{\title{
On Generalized Metric Spaces for \\ the Simply Typed Lambda-Calculus
\thanks{
This work has been funded by the ERC CoG 818616 “DIAPASoN”.
}
}
}


\shortv{\author{
\IEEEauthorblockN{Paolo Pistone}
\IEEEauthorblockA{Universit\`a di Bologna\\
Email:paolo.pistone2@unibo.it}}

\IEEEoverridecommandlockouts
\IEEEpubid{\makebox[\columnwidth]{978-1-6654-4895-6/21/\$31.00~
\copyright2021 IEEE \hfill} \hspace{\columnsep}\makebox[\columnwidth]{ }}

  \IEEEcompsocthanksitem
\maketitle
}

\begin{abstract}

Generalized metrics, arising from Lawvere's view of metric spaces as enriched categories, have been widely applied in denotational semantics as a way to measure \emph{to which extent} two programs behave in a similar, although non equivalent, way. 
However,  the application of generalized metrics to higher-order languages like the simply typed lambda calculus has so far proved unsatisfactory.
In this paper we investigate a new approach to the construction of cartesian closed categories of generalized metric spaces.
Our starting point is a quantitative semantics based on a generalization of usual logical relations. Within this setting, we show that several families of generalized metrics provide ways to
extend the Euclidean metric to all higher-order types.

\end{abstract}


\shortv{
\IEEEpeerreviewmaketitle
}

\section{Introduction}
In the literature on program semantics
much attention has been devoted to program equivalence, and, accordingly, to the study of program transformations which do not produce observable changes of behavior. 
However, in fields involving {numerical} or {probabilistic} forms of computation one often deals with transformations that \textit{do} alter program behavior, replacing a piece of program with one which is only approximately equivalent. 
For example, numerical methods (e.g.~linear regression, numerical integration) are based on the replacement of computationally expensive operations with more efficient, although less precise, ones. 
On another scale, statistical learning algorithms compute approximations of a desired function by fitting with a finite sample. 

The challenge that accompanies the use of such \emph{approximate} program transformations \cite{chaudhuri} is to come up with methods to measure and bound the error they produce.
This has motivated much literature on \emph{program metrics} \cite{ARNOLD1980181, VANBREUGEL20011,Gabo2019, Escardo1999, BAIER1994171,DalLago2015, 10.1007/978-3-662-44584-6_4, 10.1007/978-3-662-54434-1_13, 10.1145/3209108.3209149}, that is, on semantics in which types are endowed with a notion of distance. 
This approach has found widespread applications, for example in differential privacy \cite{Gaboardi2017, 10.1007/978-3-642-29420-4_3, Barthe_2012} and reinforcement learning \cite{Panangaden2011}.

A natural framework for the study of program metrics and their abstract properties is provided by so-called \emph{generalized metrics}. Since Lawvere's \cite{Lawvere1973} it has been known that some of the basic axioms of standard metric spaces (notably, the \emph{reflexivity} and \emph{transitivity} axioms $d(x,x)=0$ and $d(x,z)+d(z,y)\geq d(x,y)$) can be seen, at a higher level of abstraction, as describing the structure of a category \emph{enriched} over some quantitative algebra. Typically, when this algebra is the usual semi-ring of positive reals (i.e.~when ``0'' actually means zero, and ``+'' actually means plus), one gets the metric spaces everyone is used to. However, one can consider generalized distance functions $d: X\times X\to Q$, where $Q$
is now a different algebra (typically a \emph{quantale} or a \emph{quantaloid} \cite{Hofmann2014}), 
and the monoidal structure of $Q$ determines the actual meaning of the metric axioms. 
Well-investigated examples of this generalized approach are given by \emph{ultra}-metric spaces \cite{VANBREUGEL20011,Escardo1999}, \emph{partial} metric spaces \cite{matthews, Bukatin1997,AGT7849,Stubbe2018} and \emph{probabilistic} metric spaces \cite{Sklar1983, Hofmann:2013aa}.

Generalized program metrics have been applied in several areas of computer science, e.g.~to co-algebraic \cite{DBLP:journals/lmcs/BaldanBKK18,Pouzet2020}, and concurrent \cite{10.1007/978-3-662-44584-6_4} systems, and to algebraic effects \cite{Plotk,10.1145/3209108.3209149}. 
%
However, the application of program metrics to even basic \emph{higher-order} languages like the simply typed $\lambda$-calculus $\STLC$ has so far proved unsatisfactory. 
One can mention both theoretical and practical reasons for this failure. At the abstract level, for instance, there is the well-known fact that standard categories of metric spaces, even generalized, are usually \emph{not} cartesian closed, and thus only account for linear or sub-exponential variants of $\STLC$ \cite{10.1145/1932681.1863568, Gaboardi_2013, Gaboardi2017}.
At a more practical level, there is the observation that 
%
 even with such restrictions, the distance between two functional programs computed in such models is often  not very informative, as it estimates the error of replacing one program by the other one \emph{in the worst case}, and thus independently of the current \emph{context} in which these programs are placed. 
 

%
%

In this paper we introduce a new class of program metric semantics for $\STLC$ which overcome the aforementioned difficulties. 
These semantics arise from the study of a class of quantitative models based on what we call \emph{quantitative logical relations} (in short, QLR). 
A QLR is just what remains of a generalized metric space when one discards the reflexivity and transitivity axioms; in other words, it is nothing more than a function $a: X\times X\to Q$ relating pairs of points $x,y\in X$ with an element $a(x,y)$ of some quantitative algebra $Q$. 
At the same time, such functions can be seen as quantitative analogs of standard logical relations. The difference is that while with the latter
two programs may or may not be related, with QLR two programs are always related \emph{to a certain degree}. 
 
We believe that models for $\STLC$ should be as elementary as possible. By the way, the category of sets is itself a denotational model of $\STLC$. For this reason, we do not, at first, impose any restriction (e.g.~continuity, Lipschitz continuity) over the set-theoretic functions between QLR. 
Importantly, maps of QLR can relate functions measuring distances over \emph{different} quantitative algebras. For this reason, set-theoretic maps are accompanied by a second map, a sort of \emph{derivative}, relating \emph{errors} in input with errors in output.
 This idea, which extends similar ones from \emph{differential logical relations} \cite{dallago,dallago2} and \emph{diameter spaces} \cite{Geoffroy2020}, 
  mark the main difference between our approach and usual metric semantcs (in which one usually considers a \emph{fixed} quantale), and is a key ingredient to obtain models of the full $\STLC$.

Our first contribution is to show that several variants of QLR form cartesian closed categories and that some standard results about logical relations have a quantitative analog in the realm of QLR. These results show that QLR-models capture quantitative relational reasoning of higher-order programs in a fully compositional way.

However, recall that our starting point was program metric semantics, and QLR, by their very definition, are \emph{not} metric spaces. 
Yet, 
since generalized metrics are particular cases of QLR, the latter provide an ideal environment 
to investigate \emph{which} families of generalized metrics
(i.e.~which choices of  the ``0'' and the ``+'')
 adapt well to the cartesian closed structure.

Our second contribution is a characterization of the class of generalized metric spaces that give rise to cartesian closed categories of QLR. These results demonstrate the existence of a variety of compositional metric semantics of $\STLC$ which extend the Euclidean metrics over the reals to all simple types.

Finally, we show that the derivatives found in QLR-models can be compared with those appearing in other quantitative models of $\STLC$, like those arising from the \emph{differential $\lambda$-calculus} \cite{ER,Blute2009, BUCCIARELLI2010213}.

\medskip
\paragraph*{Outline} 
After motivating the introduction of QLR in Section \ref{section2}, in Section \ref{section3} we recall the definition of some classes of generalized metric spaces; in Section \ref{section4} we introduce two cartesian closed categories $\QQ$ and $\QQR$ of QLR, and we describe the interpretation of $\STLC$ in them. In Section \ref{section5} we investigate the generalized metrics which form cartesian closed sub-categories of $\QQ$ and $\QQR$.
Finally, in Section \ref{section5} we construct a different cartesian closed category $\LLE_{\mathsf{Met}}$ of generalized metric spaces based on a ``locally Lipschitz'' condition for QLR morphisms.

\section{Higher-Order Metric Semantics}\label{section2}


\subsection{Program Metrics and Higher-Order Languages}

Program metrics have been widely investigated to capture properties like program \emph{similarity} and \emph{sensitivity}. 
The fundamental idea is usually to associate types $\sigma,\tau$ with metric spaces, and programs $f:\sigma\to \tau$ with \emph{non-expansive}, or more generally \emph{Lipschitz continuous} functions. This means that for all programs $t,u$ of type $\sigma$, the distance between $f(t)$ and $f(u)$ does not exceed that between $t$ and $u$ by more than a fixed factor $L$ (formally, $d(f(t),f(u))\leq L\cdot d(t,u)$).

%

However, the approach just sketched is not satisfactory for the interpretation of \emph{higher-order} languages, as those based on $\STLC$. The main problem is that the category $\Met_{Q}$ of metric spaces over a quantale $Q$ and non-expansive maps \cite{Hofmann2014}, which provides the abstract setting for usual program metrics, is not compatible with the usual structure of models of $\STLC$.
More precisely, while the space $ \Met_{Q}(X,Y)$ of non-expansive functions can be endowed with a metric (the $\sup$-metric $d_{\sup}(f,g)=\sup\{d(f(x),g(x))\mid x\in X\}$), this construction does \emph{not} yield a right-adjoint to the categorical product. For this reason $\Met_{Q}$ is not a \emph{cartesian closed} category (although $\Met_{Q}$ still admits some interesting cartesian closed \emph{sub}-categories, see \cite{CLEMENTINO20063113, Clementino:2009aa}). 
%

This abstract issue is not the only one has to face, though.
After all, category theory is usually invoked in program semantics as a way to enforce \emph{compositionality}, i.e.~the property by which the semantics of a composed program is expressed in terms of the semantics of its components.
Yet, even if we accept to restrict ourselves to higher-order languages compatible with the categorical structure of $\Met_{Q}$ (like e.g.~the system $\mathsf{Fuzz}$ \cite{10.1145/1932681.1863568}), the metric $d_{\sup}$ still does not account for the behavior of higher-order programs in a sufficiently {compositional}, and, in the end, informative way. For example, as observed in \cite{dallago}, consider the two Lipschitz functions $f=\lambda x.\sin(x)$ and $g=\lambda x.x$: since $f$ and $g$ get arbitrarily far from each other in the worst case (i.e.~as $x$ approaches $\pm\infty$), one can deduce that $d_{\sup}(f,g)$ is infinite. Hence, the distance $d_{\sup}(f,g)$ provides no significant information in any situation in which $f$ is replaced by $g$ as a {component} of a larger program: for instance, if $\TT C[\ ]$ is a context applying a function on values close to 0, the programs $\TT C[f]$ by $\TT C[g]$ will likely turn out close, yet there is no way to predict this fact on the basis of $d_{\sup}(f,g)$.

A related issue occurs with \emph{contextual} notions of distance, as those found e.g.~in probabilistc extensions of the $\lambda$-calculus \cite{DalLago2015}. These metrics extend usual contextual equivalence, by letting the distance $d_{\mathsf{ctx}}(t,u)$ between two objects of type $\sigma$ be the sup of all observable distances $d_{\mathsf{Euc}}(\TT C[f], \TT C[g])$, for any context $\TT C[\ ]:\sigma\To \Real$. As shown in \cite{10.1007/978-3-662-54434-1_13}, the non-linearity of $\STLC$ can be used to define contexts that arbitrarily {amplify} distances, with the consequence that the metric $d_{\mathsf{ctx}}$ trivializes onto plain contextual equivalence.

\subsection{From Program Metrics to Quantitative Logical Relations}

To overcome these issues, in Section \ref{section4} we introduce \emph{quantitative logical relations}, a quantitative extension of usual logical relations (generalizing previous approaches \cite{dallago,dallago2, Geoffroy2020}) which, on the one hand, applies to higher-order programs without restrictions (e.g.~Lipschitz-continuity), and, on the other hand, enables reasoning about behavioral similarity in a 
fully compositional way.

%


Semantically, logical relations for a programming language $\C L$ can be introduced starting from a denotational model of $\C L$ (for simplicity, we consider a simple set-theoretic model, associating each type $\sigma$ with a set $\model \sigma$ and each program $t:\sigma\to \tau$ with a function $\model t:\model \sigma \to \model \tau$); one then constructs a more refined model whose objects are binary relations $r: \model \sigma \times \model \sigma\to\{0,1\}$, and whose arrows are those functions from our original model which send related points into related points (in more abstract terms, this construction is an instance of the \emph{glueing} construction, see \cite{Schalk2003}). The so-called \emph{Fundamental Lemma} tells us then that any program $t:\sigma\to \tau$ of $\C L$ yields a morphism in this model, i.e.~preserves relatedness.

While in logical relations relatedness is measured over a fixed algebra (the Boolean algebra $\{0,1\}$), in QLR relatedness is measured over a larger class of quantales.  Hence, a QLR is of the form 
$a: \model \sigma \times \model \sigma\to \nudel \sigma$, where $\nudel \sigma$ is some quantale associated with $\sigma$. Typically, when $\sigma$ is a functional type, $\nudel{\sigma}$ will be some quantale of functions mapping differences in input into differences in output. 

%
%

%

To interpret a program $t:\sigma\to \tau$ we must accompany the function $\model t$ with a second function $\nudel t: \model \sigma \times \nudel \sigma  \to \nudel \tau$ mapping differences in $\nudel\sigma$ \emph{around some point of $\model\sigma$} into differences in $\nudel \tau$. 
The function $\nudel t$ can be seen as sort of \emph{derivative} of $\model t$, and is the key ingredient to reason about $t$ in a compositional way: if $\alpha\in \nudel \sigma$ measures the similarity of two programs $u,v$ and $\TT C[\ ]:\sigma \to \tau$ is a context with derivative $\nudel{\TT C}$, then by composing $\nudel{\TT C}$ with $\model u$ and $\alpha$, we obtain a measure of the similarity between $\TT C[u]$ and $\TT C[v]$. 
Notably, the Fundamental Lemma of logical relations translates in this setting into a result showing that any program $t$ from $\C L$ translates into a derivative $\nudel t$, yielding a fully compositional semantics for $\C L$.

For instance, take $\model \Real=\BB R$ and $\nudel \Real=\BB R_{\geq 0}$; if $f,g: \Real\to \Real$ are the two programs $\lambda x.\sin(x), \lambda x.x$ seen before and $\TT C[\ ]=[\ ]0:(\Real\to \Real)\to \Real$ is the context that applies a function to $0$, in our setting we can reason as follows:
first, the difference $d(\model f, \model g)$ will be itself a function mapping small differences in input \emph{around} 0 onto small differences in output; secondly, the derivative $\nudel{\TT C}$ 
will be such that that the value $\nudel{\TT C}(f,\varphi)$ only depends on how much $\varphi$ grows on small neighborhoods of $0$; hence, the difference between $\TT C[f]$ and $\TT C[g]$, computed by applying
$\nudel{\TT C}$ to $\model f $ and to $d(\model f, \model g)$, will yield a value close to 0.

Similar ideas already appear in \cite{dallago2, Geoffroy2020} and have been shown to 
provide a compositional account of techniques from \emph{incremental computing} and
\emph{approximate programming} (e.g.~{loop perforation} \cite{loopperf} and {numerical integration}).
The study of QLR, that we develop here, is intended to capture 
the basic structure underlying such (non-equivalent) constructions, and to characterize a much larger family of quantitative and metric models to which those from \cite{dallago2, Geoffroy2020} belong.

\subsection{...and back to Generalized Metric Spaces}

While a QLR $a: \model \sigma \times \model \sigma \to \nudel\sigma$ needs not be a metric, several classes of generalized metric spaces can be seen as QLR satisfying further properties.
One can thus ask \emph{which} families of generalized metrics can be lifted to all simple types within a given QLR-model.

In Section \ref{section5} we investigate generalized metrics in categories of QLR with unrestricted morphisms (that is, with no continuity or Lipschitz restriction). We show that, under some mild assumptions, 
lifting metrics to simple types forces distances to be idempotent (i.e.~to satisfy $\alpha=\alpha+\alpha$). This implies that the generalized metrics that can be lifted to all simple types are of two kinds: firstly,  the \emph{ultra-metric} and \emph{partial ultra-metric} spaces, that is, those metrics based on an idempotent quantitative algebra; secondly, 
those generalized metrics whose distance function can be \emph{factorized} through an idempotent metric. By extending a construction from \cite{Geoffroy2020} relating partial metrics with lattice-valued distances, we show that the Euclidean metric, as well as many other standard {metrics} and {partial metrics}, belong to this second class.

In Section \ref{section6} we investigate generalized metrics in categories of QLR where morphisms satisfy suitable generalizations of the \emph{Lipschitz} and \emph{locally Lipschitz} continuity conditions. 
We first show that the first condition does \emph{not} yield a cartesian closed category, 
for reasons very similar to those found when considering metrics over a fixed quantale.
We then show that the second does yields, instead, a model of $\STLC$ in which types are interpreted by generalized metric spaces.

\section{Generalized Metric Spaces}\label{section3}

In this paper we consider several variants of metric spaces. It is thus useful to adopt a general and abstract definition of what we take a (generalized) metric space to be. 
We exploit the abstract formulation of generalized metric spaces as enriched categories dating back to Lawvere's \cite{Lawvere1973}, 
who first observed that a metric space in the standard sense can be seen as a category enriched in the monoidal poset $([0,+\infty), \geq, 0,+)$ of positve real numbers under reversed ordering and addition.

%
%

\subsection{Metrics over an Arbitrary Quantale}

The standard axioms of metric spaces involve an order relation and a monoidal operation (addition) with a neutral element 0. This structure is characterized by a \emph{monoidal poset}, that is, a tuple $(M,\geq, 0,+)$ where $(M,\geq)$ is a poset and $(M,0,+)$ is a monoid such that $+$ is monotone.  
 In practice, one is usually interested in measuring distances in monoidal posets where $\sup$s and $\inf$s always exist. This leads to consider (commutative and integral) quantales:
 
 \begin{definition}
 A (commutative) quantale is a commutative monoidal poset $(Q,0,+,\geq)$ such that $(Q,\geq)$ is a complete lattice satisfying
$ \alpha +\bigwedge S= \bigwedge\{\alpha+\beta\mid \beta\in S\}$, for all $S\subseteq Q$.
%
A quantale $(Q,0,+,\geq)$ is \emph{integral} when $0=\bot$.
 A commutative quantale $Q$ is a \emph{locale} when $0=\bot$ and $\alpha=\alpha+\alpha$ holds for all $\alpha\in Q$ (or, equivalently, when $\alpha+\beta=\alpha\lor\beta$).

 \end{definition}

 \begin{remark}
With respect to common presentations of quantales, we adopt here the \emph{reversed} order (so that $\bigvee$s and $\bigwedge$s are inverted), as this is more in accordance with the quantitative intuition.
 \end{remark}

 \begin{example}[The Lawvere quantale]
The structure $(\BB R^{\infty}_{\geq 0}, 0, +, \leq)$, where $\BB R^{\infty}_{\geq 0}$ is the set of positive reals plus $\infty$, is a commutative and integral quantale, and is usually referred to as the \emph{Lawvere quantale} \cite{Hofmann2014}.
 If we replace $+$ with $\sup$, the resulting structure  $(\BB R^{\infty}_{\geq 0}, 0, \sup, \leq)$ is a locale.
\end{example}

\begin{example}\label{example:subsets}
For any commutative monoid $(M,0,+)$, the structure $(\wp(M), \{0\},+,\subseteq)$, is a commutative quantale, where $A+B=\{x+y\mid x\in A, y\in B\}$. 
\end{example}

\begin{example}
All products $\Pi_{i\in I}Q_{i}$ of (commutative and integral) quantales, with the pointwise order, are still commutative and integral quantales.
\end{example}

In a quantale $Q$ one can define the following two operations:
\begin{align*}
\alpha \Res \beta  = \bigwedge \{\delta\mid \beta + \delta \geq \alpha\}\qquad
\alpha \ResH \beta  = \bigwedge \{\delta\mid \beta \vee \delta \geq \alpha\}
\end{align*}
In any quantale $\delta \geq \alpha \Res \beta$ holds iff $\delta + \beta \geq \alpha$, that is, $\Res$ is \emph{right-adjoint} to $+$. A quantale in which $\ResH$ is right-adjoint to $\vee$, i.e.~$\delta \geq \alpha \ResH \beta$ holds iff $\delta \vee \beta \geq \alpha$, is called a \emph{Heyting quantale} \cite{Hofmann2014, CLEMENTINO20063113}.
The Lawvere quantale and all other quantales obtained from it by product are Heyting. Moreover, all locales are Heyting.

\begin{example}
In the Lawvere quantale  $x\Res y= \max\{0, x-y\}$ and $x\ResH y$ is $0$ if $x\leq y$ and is $x$ otherwise.
\end{example}

Over any quantale $Q$ we can define generalized metric spaces as follows:

\begin{definition}\label{def:gms1}
A \emph{generalized metric space} is a triple $(X,Q,a)$ where $X$ is a set, 
 $Q$ is a commutative quantale,
and $a:X\times X\to Q$ satisfies, for all $x,y,z\in X$:
\begin{align} 
0 & \geq a(x,x)  \tag{reflexivity}\\
a(x,y)+a(y,z) & \geq a(x,z) \tag{transitivity}
\end{align}

A generalized metric space is said:
\begin{itemize}
\item \emph{symmetric} if $a(x,y)=a(y,x)$;
\item \emph{separated} if $a(x,y)=0$ implies $x=y$.
\end{itemize}
\end{definition}


Observe that, when $Q$ is integral, from the reflexivity axiom it follows that $a(x,x)=0$ holds for all $x\in X$.


Following usual terminology, we let a \emph{pseudo-metric space} be a symmetric metric space $(X,Q,a)$, and 
a \emph{standard metric space} be a separated pseudo-metric space.


The \emph{Euclidean metric} is the standard metric space $(\BB R, \BB R^{\infty}_{\geq 0}, d_{\mathsf{Euc}})$ where $d_{\mathsf{Euc}}(x,y)=|x-y|$.

\begin{example}
A standard metric space $(X,Q,a)$ in which $Q$ is a locale is usually called a \emph{ultra-metric space}. The transitivity axiom reads in this case as $a(x,y)\vee a(y,z)\geq a(x,z)$.
For instance, the \emph{sequence metric} on the set  $X^{\BB N}$ of $X$-sequences $(x_{n})_{n\in \BB N}$ is the ultra-metric space $(X^{\BB N}, \BB R^{\infty}_{\geq 0}, d_{\mathsf{seq}})$ given by $d_{\mathsf{seq}}(x_{n},y_{n})=2^{-c(x_{n},y_{n})}$, where $c(x_{n},y_{n})$ is the length or the largest common prefix of $x_{n}$ and $y_{n}$. 

\end{example}

\begin{example}\label{ex:prob}
A standard metric space $(X,\Delta,a)$ in which $\Delta$ is the quantale of \emph{distributions}, i.e.~the left-continuous maps $f: \BB R_{\geq 0}\to[0,1]$ with pointwise ordering and monoidal operation $(f\oplus g)(r)=\bigwedge_{s+t=r}f(s)\cdot g(t)$, is an example of \emph{probabilistic metric space} \cite{Sklar1983, Hofmann:2013aa}. Observe that the transitivity axiom reads in this case as $a(x,y)(r)+a(y,z)(s)\geq a(x,y)(r+s)$.

\end{example}

\subsection{Partial Metric Spaces}

In several approaches to program metrics one encounters distance functions which do not satisfy the reflexivity axiom $0\geq a(x,x)$.
A basic example (see \cite{matthews}) is obtained when the sequence metric
$d_{\mathsf{seq}}$ is extended to the set $\widehat X=\bigcup_{n}^{\infty}X^{n}\cup X^{\BB N}$ of \emph{finite and infinite} $X$-sequences (this kind of spaces are common, for instance, in domain theory): whenever $x_{n}$ is a sequence of length $k$, we have that $d_{\mathsf{seq}}(x_{n},x_{n})=2^{-k}>0$. 

The simplest way to define a metric with non-zero self-distances is simply to drop the reflexivity axiom. This yields the \emph{relaxed metrics} from \cite{Bukatin1997}. An even more drastic relaxation of the metric axioms is the one considered in \cite{dallago}, where transitivity is also weakened to\footnote{Actually, \cite{dallago} does not define a distance function $d:X\times X\to Q$ but rather a distance \emph{relation} $\rho \subseteq X\times Q\times X$ obeying a relaxed transitivity of the form $\rho(x,\alpha,y), \rho(y,\beta,y), \rho(y,\gamma,z) \Rightarrow \rho(x,\alpha+\beta+\gamma,y)$. In fact, this is the same thing as a  function $d_{\rho}: X\times X\to \wp(Q)$ (where $\wp(Q)$ indicates the quantale of subsets of $Q$ from Example~\ref{example:subsets}) satisfying \eqref{eq:superrelaxed}.\label{foot1}}
\begin{equation}\label{eq:superrelaxed}
a(x,y)\leq a(x,z)+a(z,z)+a(z,y)
\end{equation}
We will refer to the latter as \emph{hyper-relaxed metrics}.

A different approach consists in considering distance functions that \emph{do} satisfy both metric axioms, but relative to a \emph{different} monoidal structure over $Q$.
The \emph{partial metric spaces} \cite{matthews,Bukatin1997}, developed to account for domains of objects akin to the set $\widehat X$, provide an example of this approach, 
%
as shown by the elegant presentation from \cite{Stubbe2018, STUBBE201495}, that we recall below. 

For any commutative integral quantale $Q$, let $\C D(Q)$ be the category whose objects are all elements of $Q$, and where $\C D(Q)(\alpha,\beta)$ is the complete lattice of \emph{diagonals} from $\alpha$ to $\beta$, i.e.~those $\delta\in Q$ satisfying
$$\alpha+ (\delta\Res \alpha) =\delta= (\delta\Res \beta)+\beta$$
The identity morphism $\mathrm{id}_{\alpha}$ is just $\alpha$ (moreover, $\alpha$ is the smallest element of $\C D(Q)(\alpha,\alpha)$); the composition of two diagonals $\delta\in \C D(Q)(\beta,\alpha)$ and $\eta\in \C D(Q)(\gamma,\beta)$ is the diagonal 
$$
\eta+_{\beta} \gamma := \eta +(\gamma \Res \beta)\in \C D(Q)(\gamma,\alpha)
$$
The category $\C D(Q)$ is an example of \emph{quantaloid} (see~\cite{STUBBE201495}).

\begin{example}
In the Lawvere quantale, a diagonal from $x$ to $y$ is any real number $z\geq x,y$, and the composition law reads as $x+_{z}y:= x+y-z$.
\end{example}
\begin{remark}\label{rem:loca}
When $Q$ is a locale, $\C D(Q)(\alpha,\beta)=\{\gamma\mid \alpha\vee \beta \leq \gamma\}$ and 
the composition law of $\C D(Q)$ coincides with that of $Q$, since 
$\alpha\vee (\beta\ResH \gamma)= \alpha\vee \beta$ holds for all $\gamma \leq \beta$.

Using this fact, the definition of the category of diagonals can be extended to the case in which $Q$ is just a complete lattice (and thus needs not be a locale), by letting $\C D(Q)(\alpha,\beta)=\{\gamma\mid \alpha\vee \beta \leq \gamma\}$, with identities $\mathrm{id}_{\alpha}=\alpha$ and composition given by $\lor$.
 The category $\C D(Q)$ is then a quantaloid precisely when $Q$ is a locale.
\end{remark}

%

%
%
%
%


Partial metric spaces can be defined as metric spaces with respect to the monoidal structure of diagonals:
\begin{definition}
A \emph{partial metric space} is a tuple $(X,Q,t,a)$ where $X$ is a set, $Q$ is a (commutative and integral) quantale, $t:X\to Q$ and $a:X\times X\to Q$ are such, for all $x,y,z\in X$, $a(x,y)\in \Delta(Q)(ty,tx)$ and:
\begin{align}
\mathrm{id}_{tx} & \geq a(x,x) \tag{reflexivity} \\
a(x,y) +_{ty} a(y,z) & \geq a(x,z) \tag{transitivity}
\end{align}

A partial metric space is said:
\begin{itemize}
\item \emph{symmetric} if $a(x,y)=a(y,x)$;
\item \emph{separated} if $a(x,y)=a(x,x)=a(y,y)$ implies $x=y$.

\end{itemize}
\end{definition}

\begin{remark} 
When $Q$ is integral, reflexivity forces $tx=a(x,x)$, so the partial metric structure is entirely determined by the triple $(X,Q,a)$.
\end{remark}

%
%

A symmetric and separated partial metric over the Lawvere quantale $a:X\times X\to \BB R^{\infty}_{\geq 0}$ satisfies the axioms below:
\begin{description}
\item[PMS1] \ $a(x,x)\leq a(x,y),a(y,x)$;
\item[PMS2] \ $a(x,y)=a(y,x)$;
\item[PMS3] \ if $a(x,x)=a(x,y)=a(y,x)$, then $x=y$;
\item[PMS4] \ $a(x,y)\leq a(x,z) + a(z,y) - a(z,z) $.
\end{description}

Observe that a  (symmetric and separated) metric is the same as a  (symmetric and separated) partial metric with $a(x,x)=0$. Moreover, any (symmetric and separated) partial metric $a:X\times X\to Q$ gives rise to a (symmetric and separated)  metric 
\begin{align*}
a^{*}(x,y)=
(a(x,y)\Res a(x,x))+(a(x,y)\Res a(y,y))
\end{align*}
The terminology for \emph{pseudo-}, \emph{standard} and \emph{ultra-}metrics extends straightforwardly to from metric to partial metric spaces.
%

For example, the sequence metric $d_{\mathsf{seq}}$ extended to $\widehat X$ yields a partial ultra-metric space.
Another standard example of partial metric over the Lawvere quantale is the one defined over the set $\C I$ of \emph{closed intervals} $\{[r,s]\mid r\leq s\}$ by $p([r,s],[r',s'])= \max\{s,s'\}- \min\{r,r'\}$. 

\begin{remark}
The definition of partial ultra-metric spaces can be extended, as we will do in Section \ref{section5}, to the case in which $Q$ is just a complete lattice, using Remark \ref{rem:loca}. However, one must be careful that all properties that rely on the existence of the right-adjoint $\ResH$ need not hold in this case.
\end{remark}

\section{Quantitative Logical Relations}\label{section4}

In this section we introduce two categories $\QQ$ and $\QQR$ of quantitative logical relations. After describing their cartesian closed structure, we describe the interpretation of $\STLC$ in these categories and we 
show that some standard results about logical relations scale to QLR in a quantitative sense. 

\subsection{Two Categories of QLR}

A \emph{quantitative logical relation} $(X,Q,a)$ (in short, a \emph{QLR}) is the given of a set $X$, a commutative quantale $Q$ and a function $a: X\times X\to Q$. A \emph{map of quantitative logical relations} $(X,Q,a)$, $(Y,R,b)$ is a pair $(f,\varphi)$, where $f:X\to Y$, $\varphi: X\times Q\to R$ and for all $x,y\in X$,
$$
a(x,y)\leq \alpha  \ \To \ b(f(x),f(y)) \leq \varphi(x,\alpha)
$$
QLR and their maps form a category $\QQ$ having as identities the pairs $(\mathrm{id}_{X}, \lambda x\alpha.\alpha)$, and composition defined by $(g,\psi)\circ (f,\phi)= (g\circ f, \psi\circ \langle f\circ \pi_{1}, \varphi\rangle)$.

The category $\QQ$ is cartesian closed: given QLR $(X,Q,a)$ and $(Y,R,b)$,  their cartesian product is the QLR $(X\times Y, Q\times R, a\times b)$, with unit $(\{\star\}, \{\star\}, \langle\star,\star\rangle\mapsto \star)$, and 
 their exponential is the QLR $(Y^{X}, R^{X\times Q},d^{\QQ}_{a,b})$ where
\shortv{ \begin{center}
\resizebox{0.48\textwidth}{!}{
$
d^{\QQ}_{a,b}(f,g)(x,\alpha)  =  \sup  \{d(f(x),g(y)),  d(f(x),f(y)) \mid a(x,y)\leq \alpha\}
$
%
}
\end{center}}
\longv{
$$
d^{\QQ}_{a,b}(f,g)(x,\alpha)  =  \sup  \{d(f(x),g(y)),  d(f(x),f(y)) \mid a(x,y)\leq \alpha\}
$$
}
The isomorphism $\begin{tikzcd}\QQ(Z\times X, Y)\ar[bend left=5]{r}{\lambda} & \QQ(Z,Y^{X}) \ar[bend left=5]{l}[below]{\ev}\end{tikzcd}$ defining the cartesian closed structure is given by $\lambda(f,\varphi)=(\lambda (f),\lambda(\varphi))$ and $\ev(f,\varphi)=(\ev(f),\ev(\varphi))$, where
\begin{align*}
\lambda (f)(z)(x) & =f(\langle z,x\rangle)\\ 
\lambda (\varphi)(\langle z,\gamma\rangle)(\langle x, \alpha\rangle)&=
\varphi(\langle\langle z,x\rangle, \langle \gamma,\alpha\rangle\rangle)\\
\ev(f)(\langle z,x\rangle)&=f(z)(x)\\
\ev(\psi)(\langle\langle z,x\rangle, \langle \gamma,\alpha\rangle\rangle)& = 
\psi(\langle z,\gamma\rangle)(\langle x, \alpha\rangle) 
\end{align*}

%

Given QLR $(X,Q,a)$ and $(Y,R,b)$, for any function $f:X\to Y$ there exists a \emph{smallest} function $\DDer(f):X\times Q\to R$ such that $(f,\DDer(f))\in \QQ (X,Y)$, defined by
\begin{align}
\DDer(f)(x,\alpha)= \sup \{ b(f(x),f(y))\mid a(x,y)\leq \alpha\}
\end{align}
We call $\DDer(f)$ the \emph{derivative} of $f$. Derivatives in $\QQ$ satisfy the following properties:
\begin{align}
\DDer(\mathrm{id}_{X})(x, \alpha) & = \alpha \tag{D1} \label{eq:prop2}\\
\DDer( \pi_{i})(\langle x_{1},x_{2}\rangle, \langle \alpha_{1},\alpha_{2}\rangle) & =
\alpha_{i} \tag{D2}\label{eq:prop3}\\
\DDer(\langle f,g\rangle)(x,\alpha) & =\langle \DDer(f)(x,\alpha), \DDer(g)(x,\alpha)\rangle \tag{D3}\label{eq:prop4}\\
%
\DDer(g\circ f)(x,\alpha) & \leq \DDer(g)(f(x),\DDer(f)(x,\alpha)) \tag{D4}\label{eq:prop5} \\
\DDer(\lambda(f))(x, \alpha) & \leq \lambda (\DDer(f))(x,\alpha) \tag{D5}\label{eq:prop6} \\
\DDer(\ev(f))(x,\alpha) & \leq \ev(\DDer(f))(x,\alpha) \tag{D6}\label{eq:prop7}
\end{align}

Properties \eqref{eq:prop2}-\eqref{eq:prop4} recall some of the axioms of \emph{Cartesian Differential Categories} \cite{Blute2009}, a well-investigated formalization of abstract derivatives. Property \eqref{eq:prop5} is a \emph{lax} version of the chain rule, and properties \eqref{eq:prop6} and \eqref{eq:prop7} state that $\DDer$ commutes with the cartesian closed isomorphisms in a lax way.


\begin{remark}
Derivatives $\partial(f)$ in Cartesian Differential Categories are \emph{additive} in their second variable, i.e.~they satisfy $\partial(f)(x,0)=0$ and $\partial(f)(x,\alpha+\beta)=\partial(f)(x,\alpha)+\partial(x,\beta)$. By contrast, it is not difficult to construct counter-examples to the additivity of $\DDer(f)$.
Let $f,g:\BB R\to \BB R$ be given by
$$
f(x)=
\begin{cases}
x & \text{ if } |x|\leq 1 \\
2x & \text{ otherwise}
\end{cases}
\qquad
g(x)=
\begin{cases}
2x & \text{ if } |x|\leq 1 \\
x & \text{ otherwise}
\end{cases}
$$
Then $3=\DDer(f)(0,1+1)> \DDer(f)(0,1)+\DDer(f)(0,1)=2$ and 
$ 3= \DDer(g)(0,1+1) < \DDer(g)(0,1)+\DDer(g)(0,1)=4 $.
\end{remark}

The distance function on $Y^{X}$ in $\QQ$ can be characterized using derivatives as follows: given QLR $(X,Q,a)$ and $(Y,R,b)$ and functions $f,g\in Y^{X}$, let $(2, \{0<\infty\}, d_{\mathsf{disc}})$ be the QLR given by the discrete metric on $2=\{0,1\}$. Let $\mathbf h_{f,g}: 2\times X\to Y$ be the function given by 
$\mathbf h_{f,g}(0,x)=f(x)$ and $\mathbf  h_{f,g}(1,x)=g(x)$. A simple calculation yields then:
\begin{lemma}\label{lemma:distanceder}
$d^{\QQ}_{a,b}(f,g)(x,\alpha)=\DDer(\mathbf h_{f,g})(\langle\langle 0,x\rangle, \langle \infty, \alpha\rangle\rangle)$.
\end{lemma}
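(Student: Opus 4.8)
The plan is to prove the identity by directly unfolding both sides and matching the resulting suprema. First I would record the product QLR structure carried by the domain of $\mathbf h_{f,g}$: its underlying set is $2\times X$, its quantale is $\{0<\infty\}\times Q$, and its distance is the pointwise product $(d_{\mathsf{disc}}\times a)(\langle i,x\rangle, \langle j,y\rangle)=\langle d_{\mathsf{disc}}(i,j), a(x,y)\rangle$. Unfolding the definition of the derivative at the point $\langle 0,x\rangle$ with error $\langle \infty,\alpha\rangle$ then gives
\[
\DDer(\mathbf h_{f,g})(\langle\langle 0,x\rangle, \langle \infty, \alpha\rangle\rangle) = \sup\{b(\mathbf h_{f,g}(0,x), \mathbf h_{f,g}(j,y)) \mid d_{\mathsf{disc}}(0,j)\leq\infty,\ a(x,y)\leq \alpha\},
\]
where the constraint is read in the product order of $\{0<\infty\}\times Q$, that is, componentwise.

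The key observation is that the first component of the constraint is vacuous: since $\infty$ is the top element of $\{0<\infty\}$, the inequality $d_{\mathsf{disc}}(0,j)\leq\infty$ holds for both $j=0$ and $j=1$. Hence the index $j$ ranges freely over $2$, subject only to $a(x,y)\leq\alpha$, and I would split the supremum according to the value of $j$. The case $j=0$ contributes the terms $b(\mathbf h_{f,g}(0,x),\mathbf h_{f,g}(0,y))=b(f(x),f(y))$, while the case $j=1$ contributes $b(\mathbf h_{f,g}(0,x),\mathbf h_{f,g}(1,y))=b(f(x),g(y))$, using the defining equations of $\mathbf h_{f,g}$ and the fact that the symbol $d$ in the exponential formula denotes the codomain distance $b$.

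Putting the two cases together, the supremum becomes exactly $\sup\{b(f(x),g(y)),b(f(x),f(y)) \mid a(x,y)\leq\alpha\}$, which is precisely the definition of $d^{\QQ}_{a,b}(f,g)(x,\alpha)$; this closes the argument. The proof is entirely a matter of unfolding definitions, so I do not anticipate a genuine obstacle. The only point requiring care is the correct reading of the product order together with the recognition that the $\infty$ appearing in the second slot of the input is exactly what collapses the discrete-metric constraint and forces both $j=0$ and $j=1$ into the supremum — it is this choice of $\infty$ that makes the ``$f$ versus $f$'' and ``$f$ versus $g$'' terms appear simultaneously.
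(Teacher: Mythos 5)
Your proof is correct and follows essentially the same route as the paper's: unfold $\DDer(\mathbf h_{f,g})$ at $\langle\langle 0,x\rangle,\langle\infty,\alpha\rangle\rangle$ using the componentwise constraint of the product QLR, observe that $d_{\mathsf{disc}}(0,i)\leq\infty$ is vacuous, and split the supremum over $i\in\{0,1\}$ to recover the two terms $b(f(x),f(y))$ and $b(f(x),g(y))$ defining $d^{\QQ}_{a,b}(f,g)(x,\alpha)$. Your added remark identifying the $\infty$ in the second slot as the mechanism collapsing the discrete constraint is a correct and helpful gloss on a step the paper performs silently.
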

\longv{\begin{proof}
We have that
\begin{align*}
\DDer & (\mathbf h_{f,g})  (\langle\langle 0,x\rangle, \langle \infty, \alpha\rangle\rangle) \\
 &= \sup\{ b(\mathbf h_{f,g}(\langle 0,x\rangle), \mathbf h_{f,g}(\langle i,y\rangle))\mid   d_{\mathsf{disc}}(0,i)\leq \infty, a(x,y)\leq \alpha\}\\ 
 &=
\sup\{ b(f(x), f(y)), b(f(x),g(y))\mid a(x,y)\leq \alpha\}\\
&=
d^{\QQ}_{a,b}(f,g)(x,\alpha)
\end{align*}
\end{proof}
}

A consequence of Lemma~\ref{lemma:distanceder} is that the self-distance of $f\in Y^{X}$ coincides with its derivative, that is:
\begin{align}\label{eq:law}
d^{\QQ}_{a,b}(f,f)=\DDer(f)
\end{align}
Observe that this property implies that the self-distance of $f$ is (constantly) zero precisely when $f$ is a constant function.

We now define a category $\QQR$ of \emph{reflexive} QLR: $\QQR$ is the full subcategory of $\QQ$ made of QLR $(X,Q, a)$ such that $Q$ is Heyting and satisfies the property below:
\begin{align}\label{eq:heyting}
\text{if }\alpha \leq \beta \ \text{ then } \  \beta \leq  \beta \ResH \alpha \tag{$\star\star$}
\end{align}
and such that 
 $a(x,x)=0$ holds for all $x\in X$. 


The Lawvere quantale satisfies property \eqref{eq:heyting}, and this property is stable by 
product. In particular, 
$\QQR$ inherits the cartesian product from $\QQ$. 
Instead, the exponential of $(X,Q,a)$ and $(Y,R,b)$ in $\QQR$ is the QLR $(Y^{X}, R^{X\times Q}, d^{\QQR}_{a,b})$, where 
$$
d^{\QQR}_{a,b}(f,g):= d^{\QQ}_{a,b}(f,g) \ResH \DDer(f)
$$
Observe that $d^{\QQR}_{a,b}(f,f)= \DDer(f)\ResH\DDer(f)=0$.
The isomorphism $\begin{tikzcd}\QQR(Z\times X, Y)\ar[bend left=5]{r}{\lambda^{\mathsf r}} & \QQR(Z,Y^{X}) \ar[bend left=5]{l}[below]{\ev^{\mathsf r}}\end{tikzcd}$ is given by:
\begin{align*}
\lambda^{\mathsf{r}} (f,\varphi)& = (\lambda (f), \lambda(\varphi)\ResH \lambda z. \DDer( f (\langle z,\_))) \\
\ev^{\mathsf{r}}(f,\varphi)&= (\ev(f), \ev(\varphi) \vee \lambda z. \DDer(f(z)(\_)))
\end{align*}
%
%
\begin{remark}
In the absence of property \eqref{eq:heyting}, reflexive QLR only form a cartesian \emph{lax}-closed category \cite{Seely}. In particular, one has that $\ev^{\mathsf r}(\lambda^{\mathsf r}(f,\varphi))=\varphi$ and $\lambda^{\mathsf r}(\ev^{\mathsf r}(f,\psi))\leq \psi$ (in other words, $\beta$-reduction is preserved while $\eta$-reduction \emph{decreases} the interpretation).
\end{remark}

\begin{remark}\label{rem:distances}
In $\QQ$ and $\QQR$ we can define a ``na\"ive'' lifting of the Euclidean metric to all simple types built over the reals. This yields the two distance functions $d$ and $e$ on $\BB R^{\BB R}$ below:

\medskip
\noindent
\adjustbox{scale=0.9}{
\begin{minipage}{\linewidth}
\begin{align*}\label{eq:relaxed}
d(f,g)(x,\alpha)& =\sup\{d_{\mathsf{Euc}}(f(x),f(y)), d_{\mathsf{Euc}}(f(x),g(y))\shortv{\\ 
 &\qquad\qquad\qquad\qquad\qquad\qquad }\mid d_{\mathsf{Euc}}(x,y)\leq \alpha\} \\
e(f,g)(x,\alpha)&=
\begin{cases}
d(f,g)(x,\alpha) & \text{ if } d(f,g)(x,\alpha) > \DDer(f)(x,\alpha) \\
0 & \text{ otherwise}
\end{cases}
\end{align*}
\end{minipage}
}
\end{remark}

One can also consider categories $\QQS, \QQRS$ of \emph{symmetric} (resp.~reflexive and symmetric) QLR. 
One has the following:

\begin{lemma}\label{prop:symmetric0}
Let $(X,Q,a)$, $(Y,R,b)$ be symmetric QLR. If $R$ is a locale, then their exponential QLR in $\QQ$ is still symmetric.
\end{lemma}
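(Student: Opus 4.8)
The goal is to show that $d^{\QQ}_{a,b}(f,g) = d^{\QQ}_{a,b}(g,f)$ as elements of $R^{X \times Q}$, i.e. that these two functions agree on every argument $(x,\alpha)$. Let me unpack the definition and see where symmetry enters.

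The distance is
$$d^{\QQ}_{a,b}(f,g)(x,\alpha) = \sup\{b(f(x),g(y)),\ b(f(x),f(y)) \mid a(x,y) \leq \alpha\}.$$

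The term $b(f(x),f(y))$ is already the self-distance part (the derivative $\DDer(f)(x,\alpha)$), and it is symmetric in the *roles* of $f$ only — it doesn't involve $g$ at all.

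Let me think about whether this lemma is even true as stated... I need to verify the symmetry carefully.

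---

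The plan is to compute both sides and compare them term by term, using both the symmetry hypotheses on $a$ and $b$ and the locale hypothesis on $R$. First I would write out both distances explicitly. Using the symmetry of $a$, the index set $\{y \mid a(x,y) \leq \alpha\}$ is the same whether I write $a(x,y)$ or $a(y,x)$, so the quantifier is unproblematic. The asymmetry, if any, must come from the cross-term $b(f(x),g(y))$ versus $b(g(x),f(y))$.

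The key idea is the following. By Lemma~\ref{lemma:distanceder}, the exponential distance is itself a derivative, namely $d^{\QQ}_{a,b}(f,g)(x,\alpha) = \DDer(\mathbf{h}_{f,g})(\langle\langle 0,x\rangle,\langle\infty,\alpha\rangle\rangle)$, where $\mathbf{h}_{f,g}$ is the combined function on $2 \times X$. A derivative on a \emph{symmetric} target is symmetric in the appropriate sense; since $b$ is symmetric and the discrete metric on $2$ is symmetric, the product QLR $(2 \times X, \{0<\infty\} \times Q, d_{\mathsf{disc}} \times b)$ is symmetric, and $b(f(x),g(y)) = b(g(y),f(x))$. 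The point where the \emph{locale} hypothesis on $R$ is needed is in reconciling the two supremum expressions: when I swap $f$ and $g$ I obtain a sup over $\{b(g(x),f(y)), b(g(x),g(y))\}$, and to match this against the original I must absorb the differing self-terms. In a locale, $\beta = \beta + \beta$ and $+$ coincides with $\vee$, so the triangle-inequality-style bounds $b(f(x),g(y)) \leq b(f(x),f(y)) \vee b(f(y),g(y))$ available from transitivity collapse additive overhead, and a sup over one labelled index set equals the sup over the swapped one.

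Concretely, the main steps I would carry out are: (1) expand $d^{\QQ}_{a,b}(g,f)(x,\alpha) = \sup\{b(g(x),f(y)), b(g(x),g(y)) \mid a(y,x) \leq \alpha\}$, rewriting the guard via symmetry of $a$; (2) using symmetry of $b$, rewrite every instance $b(g(x),f(y))$ as $b(f(y),g(x))$ and bound it, via transitivity of $b$ and the locale identity $+ = \vee$, by a join of terms already present in $d^{\QQ}_{a,b}(f,g)(x,\alpha)$; (3) argue the reverse inequality by the symmetric argument, so the two sups coincide. I would package the triangle estimates as $b(f(y),g(x)) \leq b(f(y),f(x)) \vee b(f(x),g(x))$, noting that in a locale each transitivity instance becomes a join, so the cross-terms in one direction are dominated by joins of terms in the other direction, and vice versa.

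The main obstacle I expect is step (2): making the bookkeeping of the supremum precise so that the cross-terms $b(f(x),g(y))$ and $b(g(x),f(y))$ genuinely match up across the two directions. These are not literally equal — they involve different points — so symmetry of $b$ alone does not suffice, and I must genuinely exploit that in a locale the self-distance terms $b(f(x),f(y))$ already appearing in the sup dominate the additive corrections coming from transitivity. The delicate part is verifying that no additive residue survives: this is exactly where the hypothesis "$R$ is a locale" (rather than merely a symmetric quantale) is essential, since over a non-idempotent quantale the swap would introduce a genuine additive penalty of the form $b(f(x),g(x))$ that cannot be absorbed into the existing join.
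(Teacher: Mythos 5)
Your proposal follows essentially the same route as the paper's proof: both compare the two suprema termwise, bounding $b(g(x),f(y))$ and $b(g(x),g(y))$ by joins through the pivot $f(x)$ (using symmetry of $b$ and the locale identity $+=\vee$ in $R$), whose components already occur among the terms $b(f(x),f(y))$, $b(f(x),g(y))$ defining $d^{\QQ}_{a,b}(f,g)(x,\alpha)$, and then conclude by exchanging the roles of $f$ and $g$. Even your reliance on a transitivity-style estimate for $b$, and on the term $b(f(x),g(x))$ being available in the supremum, mirrors exactly the steps taken in the paper's own argument, so the match is complete (the appeal to Lemma~\ref{lemma:distanceder} in your plan is only decorative framing and plays no role in the actual computation).
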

\longv{
\begin{proof}[Proof of Lemma~\ref{prop:symmetric0}]
If $R$ is a locale, then we have that for all $x,y\in X$, $\alpha\in Q$ with $a(x,y)\leq \alpha$, 
$b(g(x),f(y)) \leq  b(g(x),f(x))\vee b(f(x),f(y))= b(f(x),f(y))\vee b(f(x),g(x))\leq d^{\QQ}_{a,b}(f,g)(x,\alpha)$ and 
$b(g(x),g(y)) \leq  b(g(x),f(x))\vee b(f(x),g(y))= b(f(x),g(x))\vee b(f(x),g(y))\leq d^{\QQ}_{a,b}(f,g)(x,\alpha)$, since $b$ is symmetric. From this we deduce that 
$d^{\QQ}_{a,b}(g,f)(x,\alpha)=
\sup\{ b(g(x),g(y)), b(g(x), f(y))\mid a(x,y)\leq \alpha\}\leq d^{\QQ}_{a,b}(f,g)(x,\alpha)$ and conversely.
\end{proof}
}

As a consequence, the categories $\QQU$ and $\QQRU$ of symmetric (resp.~reflexive and symmetric) QLR $(X,Q,a)$ where $Q$ is a locale, {are} cartesian closed subcategories of $\QQ, \QQR$, respectively.  We will meet these two categories in the next section.

The locale-valued symmetric QLR are essentially the only ones to inherit the cartesian closed structure of $\QQ$ and $\QQR$, as shown be the lemma below\longv{ (which is proved in the next section)}.

\begin{lemma}\label{prop:symmetric}
Let $(X,Q,a)$, $(Y,R,b)$ be symmetric QLR, where $Y$ is \emph{injective} (\cite{Espinola:2001aa,CLEMENTINO20063113}\longv{, see also Section \ref{section5}}) and $X$ contains two points $v_{0},v_{1}$ with $a(v_{0},v_{1})\neq 0$. Then, if the exponential of $X$ and $Y$ in $\QQ$ is symmetric, then for all $\alpha\in R$ such that $\alpha+\alpha\in Im(b)$, $\alpha=\alpha+\alpha$.
\end{lemma}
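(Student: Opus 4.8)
The plan is to exploit the asymmetry built into the exponential distance
$d^{\QQ}_{a,b}(f,g)(x,\alpha)=\sup\{b(f(x),g(y)),\,b(f(x),f(y))\mid a(x,y)\leq\alpha\}$, in which the first function $f$, evaluated at the base point $x$, always plays the role of anchor. I would build two functions $f,g\colon X\to Y$ whose behaviour at $v_0$ is genuinely different, arranged so that the ``large'' distance $\alpha+\alpha$ is forced to appear inside $d^{\QQ}_{a,b}(f,g)$ but is absent from $d^{\QQ}_{a,b}(g,f)$; the assumed symmetry of the exponential then collapses $\alpha+\alpha$ onto $\alpha$.

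Concretely, first I would fix $y_0,y_1\in Y$ with $b(y_0,y_1)=\alpha+\alpha$, which exist since $\alpha+\alpha\in \mathrm{Im}(b)$. The crucial use of injectivity is to produce a \emph{midpoint}: an injective $Y$ enjoys the Helly/ball-intersection property (cf.\ \cite{Espinola:2001aa,CLEMENTINO20063113}), so from $b(y_0,y_1)=\alpha+\alpha\leq\alpha+\alpha$ the two balls of radius $\alpha$ about $y_0$ and $y_1$ meet, yielding $z\in Y$ with $b(y_0,z)\leq\alpha$ and $b(z,y_1)\leq\alpha$ (both ways, as $b$ is symmetric). Writing $\beta:=a(v_0,v_1)\neq 0$, I would then take $f$ constantly equal to $z$ except $f(v_0)=y_0$, and $g$ constantly equal to $z$ except $g(v_1)=y_1$.

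Next I would evaluate both exponential distances at $(v_0,\beta)$. On one side the index $y=v_1$ is admissible ($a(v_0,v_1)\leq\beta$) and contributes the term $b(f(v_0),g(v_1))=b(y_0,y_1)=\alpha+\alpha$, so $d^{\QQ}_{a,b}(f,g)(v_0,\beta)\geq\alpha+\alpha$. On the other side the anchor is $g(v_0)=z$, so every term is $b(z,f(y))$ or $b(z,g(y))$ with $f(y),g(y)\in\{z,y_0,y_1\}$, and each such term is $\leq\alpha$, using $b(z,y_0),b(z,y_1)\leq\alpha$ together with $b(z,z)=0\leq\alpha$ (here I use that an injective $Y$ is in particular a reflexive metric space, so its self-distances vanish). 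Hence $d^{\QQ}_{a,b}(g,f)(v_0,\beta)\leq\alpha$. Symmetry of the exponential gives $d^{\QQ}_{a,b}(f,g)(v_0,\beta)=d^{\QQ}_{a,b}(g,f)(v_0,\beta)$, whence $\alpha+\alpha\leq\alpha$; combined with $\alpha\leq\alpha+\alpha$ (immediate from $0\leq\alpha$, i.e.\ integrality, which holds for all quantales considered here) this yields $\alpha=\alpha+\alpha$.

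The main obstacle I expect is making both estimates genuinely one-sided. The sup ranges over \emph{all} $y$ with $a(v_0,y)\leq\beta$, so I must ensure no such $y$ produces a term exceeding $\alpha$ on the $(g,f)$ side: this is exactly what forces the midpoint $z$ (filling with an extremal value such as $y_0$ would reintroduce $b(y_0,y_1)=\alpha+\alpha$ on \emph{both} sides and destroy the asymmetry), and it is what forces care with the diagonal terms $b(w,w)$ arising at $y=v_0$. Controlling those self-distances is the second place where injectivity enters, via reflexivity of $Y$, and justifying the ball-intersection/midpoint property from the abstract notion of injective object is the step most dependent on the cited characterisations of injectivity over a quantale.
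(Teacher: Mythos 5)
Your proof is correct and follows essentially the same route as the paper's: both use injectivity of $Y$ (via an isometric extension and a non-expansive retraction) to produce auxiliary points within distance $\alpha$ of the pair realizing $\alpha+\alpha$, then build two near-constant step functions on $X$ and evaluate $d^{\QQ}_{a,b}$ in both orders at $(v_0,a(v_0,v_1))$, exploiting the anchoring of the first argument to force $\alpha+\alpha\leq\alpha$, with integrality giving equality. Your single-midpoint configuration is in fact a leaner (and cleanly written) version of the paper's two-point extension by $u_0,u_3$, but the mechanism is identical.
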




\subsection{QLR Models}

We now describe the interpretation of the simply typed $\lambda$-calculus inside $\QQ$ and $\QQR$. Concretely, this means associating each simple type with a QLR and each typed program with a morphism of QLR. We describe this situation abstractly through the notion of \emph{QLR-model}, introduced below.

%
%

%

\begin{definition}
Let $\BB C$ be a cartesian closed category. A \emph{$\QQ$-model} (resp. \emph{$\QQR$-model}) of $\BB C$ is a cartesian closed functor $F: \BB C\to \QQ$ (resp. $F:\BB C\to \QQR$).
\end{definition}

Concretely, a $\QQ$-model consists in the following data:
\begin{itemize}
\item for any object $X$ of $\BB C$, a QLR $(\model X, \nudel X, a_{X})$;
\item for any morphism $f\in \BB C(X,Y)$, functions $\model f: \model X\to \model Y$ and $\nudel f: \model X\times \nudel X\to \nudel Y$ such that $(\model f,\nudel f)$ is a QLR morphism from $\model X$ to $\model Y$, 
\end{itemize}
where the application $f\mapsto \nudel f$ satisfies suitable equations resembling Eq.~\ref{eq:prop2}-\ref{eq:prop7} (however, with equality in place of $\leq$). Observe that $\nudel f$ is in general only an \emph{approximation} of the derivative $\DDer{(\model f)}$ (that is, one has $\DDer{(\model f)}\leq \nudel f$).
%

%
%

We now describe a concrete $\QQ$-model for a simply typed $\lambda$-calculus $\STLC(\mathcal F)$ over a type $\Real$ for real numbers. More precisely, simple types are defined by the grammar
$$\sigma, \tau:=\Real\mid \sigma\to \tau\mid \sigma\times \tau$$
We fix a family $\C F=(\C F_{n})_{n>0}$ of sets of functions from $\BB R^{n}$ to $\BB R$. We consider the usual Curry-style simply-typed $\lambda$-calculus, with left and right projection $\pi_{1}$ and $\pi_{2}$, and with pair constructor $\langle \_,\_\rangle$, enriched with the following constants: for all $r\in \BB R$, a constant $\TT r:\Real$;
 for all $n>0$ and $f\in \C F_{n}$, a constant $\TT f: \Real \to \dots \to \Real \to \Real$.

The usual relation of $\beta$-reduction is enriched with the following rule, extended to all contexts: for all $n > 0$,  $f \in \C F_{n}$, and $ r_{1}, \dots,  r_{n} \in \BB R$, $\TT f\TT r_{1} \dots \TT r_{n} \longrightarrow_{\beta}\TT  s$, where $s = f(r_{1},...,r_{n})$.
  By standard arguments \cite{Krivine}, this calculus has the properties of subject reduction, confluence and strong normalization.

%
%
%
%

 We let $\Lambda(\C F)$ be the cartesian closed category whose objects are the simple types and where 
 $\Lambda(\C F)(\sigma, \tau)$ is the quotient of the set of closed terms of type $\sigma\to \tau$ under $\beta\eta$-equivalence, and composition of $[\lambda x.t]\in \Lambda(\C F)(\sigma, \tau)$ and 
 $[\lambda x.u]\in \Lambda(\C F)(\tau, \rho)$ is $[\lambda x.u(tx)]$.

 


A $\QQ$-model of $\STLC(\C F)$ is defined 
by setting $\model \Real =  \BB R$, $\nudel \Real =  \BB R^{\infty}_{\geq 0}$, $a_{\Real}=d_{\mathsf{Euc}}
$ and extending the definition of the QLR $(\model\sigma, \nudel \sigma, a_{\sigma})$ to all simple types $\sigma$ using the cartesian closed structure of $\QQ$.
%
Moreover, given a context $\Gamma=\{x_{1}:\sigma_{1},\dots, x_{n}:\sigma_{n}\}$ and a term $t$ of type $\Gamma \vdash t:\sigma$ (that we take as representative of a class of terms of type $(\prod_{i=1}^{n}\sigma_{i})\to \sigma$), the functions $\model t: \prod_{i=1}^{n}\model{\sigma_{i}} \to \model \sigma$ and $\nudel t: \prod_{i=1}^{n}\model{\sigma_{i}}\times \prod_{i=1}^{n}\nudel{\sigma_{i}}\to \nudel \sigma$ are defined by a straightforward induction on $t$. We unroll below the definition of $\nudel t$:
\begin{align*}
 \nudel{\TT r} (\vec x, \vec \alpha)& = 0  \\ 
  \nudel{\TT f}(\vec x, \vec \alpha) & = \DDer(f)(\vec x, \vec \alpha) \\
\nudel{x_{i}}(\vec x, \vec \alpha) & = \alpha_{i}\\
\nudel{\langle t,u\rangle}(\vec x, \vec \alpha)& =\langle \nudel t(\vec x, \vec \alpha),\nudel u(\vec x, \vec \alpha)\rangle\\
\nudel{t\pi_{i}}(\vec x, \vec \alpha) & = \pi_{i}(\nudel t(\vec x,\vec \alpha)) \\
\nudel{\lambda y.t}(\vec x, \vec \alpha) & = \lambda y\alpha.  \nudel t(\vec x*y, \vec \alpha*\alpha ) \\
\nudel{tu}(\vec x, \vec \alpha) & =
{\nudel t}(\vec x, \vec \alpha )(\model u(\vec x),\nudel u(\vec x,\vec \alpha))
\end{align*}
where $\vec x*y$ indicates the concatenation of $\vec x$ with $y$.

\begin{theorem}[Soundness]\label{thm:stlc}
For all simply typed terms $t$ such that $\Gamma \vdash t:\sigma$, $(\model t, \nudel t)\in \QQ(\model \Gamma, \model \sigma)$. Moreover, if $t\longrightarrow_{\beta}u$, then $\model t=\model u$ and $\nudel t=\nudel u$.
\end{theorem}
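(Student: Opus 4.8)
The plan is to prove the two assertions separately, each by an induction. For the first assertion, the key simplification is to recall that $\DDer(\model t)$ is, by its very definition, the \emph{smallest} function $\varphi$ for which $(\model t,\varphi)$ is a morphism of $\QQ$; hence $(\model t,\nudel t)\in\QQ(\model\Gamma,\model\sigma)$ is equivalent to the single pointwise inequality $\DDer(\model t)\leq\nudel t$. I would therefore prove this inequality by induction on the typing derivation of $t$, matching the explicit recursive clauses defining $\nudel t$ against the laws (D1)--(D6) for $\DDer$. I would also record at the outset the easy observation that every derivative $\DDer(f)(x,-)$ is monotone in its error argument, since $\DDer(f)(x,\alpha)=\sup\{b(f(x),f(y))\mid a(x,y)\leq\alpha\}$ and enlarging $\alpha$ enlarges the set over which the supremum is taken.

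The base cases are immediate and in fact yield equalities: a numeral $\TT r$ denotes a constant function, so $\DDer(\model{\TT r})=0=\nudel{\TT r}$; a variable $x_i$ is interpreted as a projection, so $\DDer$ equals $\alpha_i$ by (D2), matching the clause for $\nudel{x_i}$; and the clause for a function symbol $\TT f$ defines $\nudel{\TT f}$ to be exactly $\DDer(f)$. The inductive cases for pairing, projection and abstraction follow by combining the induction hypotheses $\DDer(\model t)\leq\nudel t$ and $\DDer(\model u)\leq\nudel u$ with (D3), with the chain rule (D4) together with (D2), and with the currying law (D5), respectively, using monotonicity of pairing, of $\pi_i$, of $\lambda(-)$, and of derivatives in their error argument.

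I expect the application case $tu$ to be the main obstacle, since here the two \emph{lax} laws must be combined. Writing $\model{tu}=\ev\circ\langle\model t,\model u\rangle$, I would first apply (D4) and (D3) to bound $\DDer(\model{tu})(\vec x,\vec\alpha)$ by $\DDer(\ev)$ evaluated at $\langle\DDer(\model t)(\vec x,\vec\alpha),\DDer(\model u)(\vec x,\vec\alpha)\rangle$. I would then show, by directly unfolding the definition of $\DDer$ and of the exponential distance $d^{\QQ}_{a,b}$ (in the spirit of (D6)), that $\DDer(\ev)(\langle f,x\rangle,\langle\varphi,\alpha\rangle)\leq\varphi(x,\alpha)$: indeed, if $d^{\QQ}_{a,b}(f,g)\leq\varphi$ pointwise and $a(x,y)\leq\alpha$, then taking $y$ itself in the supremum defining $d^{\QQ}_{a,b}(f,g)(x,\alpha)$ gives $b(f(x),g(y))\leq\varphi(x,\alpha)$. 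Substituting and then applying both induction hypotheses together with monotonicity of evaluation reproduces exactly the clause $\nudel{tu}(\vec x,\vec\alpha)=\nudel t(\vec x,\vec\alpha)(\model u(\vec x),\nudel u(\vec x,\vec\alpha))$.

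For the second assertion I would first establish substitution lemmas, by induction on $t$, of the shape $\model{t[u/y]}(\vec x)=\model t(\vec x*\model u(\vec x))$ and $\nudel{t[u/y]}(\vec x,\vec\alpha)=\nudel t(\vec x*\model u(\vec x),\vec\alpha*\nudel u(\vec x,\vec\alpha))$. Granting these, the ordinary $\beta$-step $(\lambda y.t)u\to t[u/y]$ is settled by unfolding the clauses for application and abstraction, which reproduce precisely the right-hand sides of the substitution lemmas; the numerical reduction $\TT f\,\TT r_1\cdots\TT r_n\to\TT s$ is settled by direct computation, since every input error $\nudel{\TT r_i}(\vec x,\vec\alpha)$ is $0$, and $\DDer(f)$ evaluated with all errors $0$ returns $\sup\{d_{\mathsf{Euc}}(f(\vec r),f(\vec r))\}=0=\nudel{\TT s}(\vec x,\vec\alpha)$. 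Finally, because $\model{-}$ and $\nudel{-}$ are defined by structural recursion, equality of the interpretations of a redex and its contractum propagates through arbitrary contexts, so the congruence closure of reduction requires no separate argument. The delicate point here is keeping the context-extension and currying bookkeeping consistent in the $\nudel$ substitution lemma, particularly in the interplay between the abstraction and application clauses.
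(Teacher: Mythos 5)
Your proposal is correct and matches the paper's (implicit) argument: the paper sets up $\nudel t$ precisely so that $t\mapsto(\model t,\nudel t)$ is a cartesian closed functor into $\QQ$, so soundness is the same structural induction you carry out, with your clauses matching properties (D1)--(D6) and $\beta$-invariance following from the substitution lemma exactly as you describe. Your reduction of morphismhood to the single inequality $\DDer(\model t)\leq\nudel t$ is just a restatement of the minimality of $\DDer$ stated in the paper, and your recorded monotonicity of derivatives in the error argument correctly discharges the only delicate chaining step (the application case, where both lax laws combine).
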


%


%

The following fact is an immediate consequence of Theorem \ref{thm:stlc} and Eq.~\eqref{eq:law}, and can be seen as a quantitative analog of the \emph{Fundamental Lemma} of logical relations, stating that any program $t$ is related to itself by $\nudel t$:

\begin{corollary}[Fundamental Lemma for QLR]\label{lemma:fundamental}
For all terms $t$ such that $\vdash t:\sigma $, 
$ a_{\sigma}(\model t, \model t)\leq \nudel t $.
\end{corollary}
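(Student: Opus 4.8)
The plan is to derive the Fundamental Lemma as a direct composition of Theorem~\ref{thm:stlc} (Soundness) with the self-distance characterization from Eq.~\eqref{eq:law}. The statement to prove is that for any closed term $\vdash t:\sigma$, we have $a_{\sigma}(\model t, \model t) \leq \nudel t$. The key observation is that the left-hand side is exactly the self-distance of the point $\model t \in \model\sigma$ in the QLR $(\model\sigma, \nudel\sigma, a_\sigma)$, and the right-hand side is the interpretation of $t$'s derivative in the model.

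**First I would** invoke Theorem~\ref{thm:stlc} to obtain that $(\model t, \nudel t) \in \QQ(\model\Gamma, \model\sigma)$; for a closed term this specializes to a morphism out of the terminal QLR, so that $\nudel t$ is (after the usual identification) an element of $\nudel\sigma$, while $\model t$ is an element of $\model\sigma$. The content of $(\model t, \nudel t)$ being a QLR morphism is precisely the defining inequality of maps in $\QQ$: relating the quantale-valued distance in the source with $\nudel t$ applied to the source distance in the target. Because the source is terminal (the unit QLR $(\{\star\},\{\star\},\dots)$), the source distance collapses and the morphism condition forces $a_\sigma(\model t, \model t) \leq \nudel t$ directly.

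**The cleaner route**, which the author signals by citing Eq.~\eqref{eq:law}, is to recall that in $\QQ$ the self-distance of any point $f$ coincides with its derivative, namely $d^{\QQ}_{a,b}(f,f) = \DDer(f)$, and that $\DDer$ is the \emph{smallest} function making the pair a QLR morphism, so $\DDer(\model t) \leq \nudel t$ always holds (as noted in the definition of $\QQ$-model). Chaining these, $a_\sigma(\model t,\model t) = \DDer(\model t) \leq \nudel t$ when we read the self-distance of the point $\model t$ through the exponential structure; more concretely, one unfolds $a_\sigma(\model t, \model t)$ at the functional type as the $d^{\QQ}$-distance and applies Eq.~\eqref{eq:law} together with the minimality of the derivative established right after its definition.

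**The main obstacle**, such as it is, is bookkeeping rather than mathematics: one must be careful about the identification of a morphism out of the terminal object with a point-plus-derivative, and about how the self-distance $a_\sigma(\model t, \model t)$ is computed inductively at higher types so that it matches $\DDer(\model t)$ via Eq.~\eqref{eq:law}. Since Theorem~\ref{thm:stlc} already guarantees $(\model t, \nudel t)$ is a valid morphism and the remark following the $\QQ$-model definition records $\DDer(\model f) \leq \nudel f$, the corollary follows essentially by unwinding definitions; I expect no genuinely hard step, which is consistent with the statement being labeled an ``immediate consequence.''
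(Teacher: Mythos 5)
Your proposal is correct and follows the paper's own (one-line) justification: the paper derives the corollary exactly as an immediate consequence of Theorem~\ref{thm:stlc} together with Eq.~\eqref{eq:law}, which is precisely the combination you spell out, both via the morphism condition at the terminal QLR and via the identification $d^{\QQ}_{a,b}(f,f)=\DDer(f)$ with the minimality bound $\DDer(\model t)\leq \nudel t$. The unwinding you describe is the intended argument, so no further comment is needed.
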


Another quite literal consequence of Theorem \ref{thm:stlc} is that program distances are \emph{contextual}: given a distance between programs $t$ and $u$, for any context $\TT C[\_]$ we can obtain a distance between $\TT C[t]$ and $\TT C[u]$:

\begin{corollary}[contextuality of distances]\label{cor:context}
For all terms $t,u$ such that $\vdash t,u:\sigma$ holds and for all context $\TT C[\ ]: \sigma \vdash \tau$,
$$
a_{\tau}(\model{\TT C[t]}, \model{\TT C[u]}) \leq \nudel{\TT C}( \model t, a_{\sigma}(\model t, \model u))
$$
\end{corollary}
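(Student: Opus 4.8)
The plan is to read the corollary as a direct instance of the soundness theorem, once we recognize the context $\TT C[\ ]$ as an ordinary morphism of the model. First I would view the context $\TT C[\ ]:\sigma\vdash\tau$ as a term $h:\sigma\vdash \TT C[h]:\tau$ with a single free variable $h$ standing for the hole, so that $\TT C[t]$ and $\TT C[u]$ are the substitution instances $\TT C[h:=t]$ and $\TT C[h:=u]$. Under this reading the model assigns to the context a pair $(\model{\TT C},\nudel{\TT C})$ with $\model{\TT C}:\model\sigma\to\model\tau$ and $\nudel{\TT C}:\model\sigma\times\nudel\sigma\to\nudel\tau$.

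Next I would invoke Theorem~\ref{thm:stlc}: since $h:\sigma\vdash \TT C[h]:\tau$ is a well-typed term, soundness guarantees that $(\model{\TT C},\nudel{\TT C})$ is a genuine morphism of QLR from $(\model\sigma,\nudel\sigma,a_\sigma)$ to $(\model\tau,\nudel\tau,a_\tau)$. By the very definition of a map of QLR, this means that for all $x,y\in\model\sigma$ and all $\alpha\in\nudel\sigma$,
$$
a_\sigma(x,y)\leq\alpha \ \To\ a_\tau(\model{\TT C}(x),\model{\TT C}(y))\leq \nudel{\TT C}(x,\alpha).
$$

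The key step is then to instantiate this implication at $x=\model t$, $y=\model u$ and $\alpha=a_\sigma(\model t,\model u)$. Since $a_\sigma(\model t,\model u)\leq a_\sigma(\model t,\model u)$ holds trivially, the premise is satisfied and we obtain
$$
a_\tau(\model{\TT C}(\model t),\model{\TT C}(\model u))\leq \nudel{\TT C}(\model t, a_\sigma(\model t,\model u)).
$$

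Finally I would rewrite the left-hand side using compositionality of the interpretation: $\TT C[t]=(\lambda h.\TT C[h])\,t$ up to $\beta$-equivalence, and since Theorem~\ref{thm:stlc} also states that $\beta$-reduction leaves $\model{\cdot}$ unchanged, we get $\model{\TT C[t]}=\model{\TT C}(\model t)$ and likewise $\model{\TT C[u]}=\model{\TT C}(\model u)$. Substituting these identities into the displayed inequality yields exactly the claim. The only point requiring a little care — and the step I would expect to be the main (if still routine) obstacle — is this last substitution lemma $\model{\TT C[t]}=\model{\TT C}(\model t)$; it is an instance of the standard compositionality of a cartesian closed functorial semantics, and follows from the inductive definition of $\model{\cdot}$ and $\nudel{\cdot}$ together with the invariance under $\beta$ recorded in Theorem~\ref{thm:stlc}.
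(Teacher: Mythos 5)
Your proposal is correct and matches the paper's intent exactly: the paper offers no separate proof, calling the corollary a ``quite literal consequence'' of Theorem~\ref{thm:stlc}, and your write-up is precisely that argument spelled out --- read $\TT C[\ ]$ as a term with a hole variable, apply soundness to get a QLR morphism $(\model{\TT C},\nudel{\TT C})$, instantiate the morphism condition at $\alpha=a_{\sigma}(\model t,\model u)$, and discharge $\model{\TT C[t]}=\model{\TT C}(\model t)$ via $\beta$-invariance of the interpretation (harmless here since $t,u$ are closed, so no capture issue arises). Nothing further is needed.
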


In a similar way one can define a $\QQR$-model of $\STLC(\C F)$ and prove analogs of the results above (where Corollary \ref{lemma:fundamental} now reads as
$a_{\sigma}(\model t, \model t)=0$).

%
%
%

\begin{remark}
Corollaries \ref{lemma:fundamental} and \ref{cor:context} generalize properties established in the setting of differential logical relations (cf.~Lemma 15 in \cite{dallago}).
\end{remark}

\begin{remark}
One can define an alternative interpretation of $\STLC$ by letting $\nudel t$ be the ``true'' derivative $\DDer(\model t)$. However, while Corollaries \ref{lemma:fundamental} and \ref{cor:context} still hold,
the operation $t\mapsto (\model t, \DDer(\model t))$ only yields a \emph{colax} functor (since one only has $\DDer(\model{u}\circ \model t)\leq \DDer(\model u)(\model t, \DDer(\model t))$).
%

\end{remark}

\section{Metrizability}\label{section5}

In this section we investigate generalized metrics in sub-categories of $\QQ$ and $\QQR$.
We first show that the relaxed and hyper-relaxed metrics all form cartesian closed subcategories of $\QQ$; we then turn to metrics and partial metrics: we show that, under suitable assumptions, the exponential QLR formed from two metric or partial metric spaces $X$ and $Y$ is a metric or a partial metric space precisely when the metric of $Y$ is idempotent (i.e.~distances satisfy $\alpha=\alpha+\alpha$).

%
%
%
%

This result can be used to show that ultra-metrics and partial ultra-metrics form cartesian closed subcategories of $\QQ$ and $\QQR$; at the same time it shows that the na\"ive lifting of the Euclidean metric (as well as of any non-idempotent metric) in either $\QQ$ or $\QQR$ is \emph{not} a generalized metric. 
Nevertheless, we show that liftings to all simple types can be defined for those metrics and partial metrics (including the Euclidean metric), whose
distance function factors as the composition of an idempotent metric and a \emph{valuation} \cite{ONeill, 10.1016/j.tcs.2003.11.016}.
%
%
%

%
%
%
%
%
%

\subsection{Relaxed metrics}

It is not difficult to check that whenever $(X,Q,a)$ and $(Y,R,b)$ are two relaxed or hyper-relaxed metrics, so is their exponential in $\QQ$. For the relaxed metrics, given $f,g,h\in Y^{X}$, using the triangular law of $Y$ we deduce that for all $x,y\in X$ and $\alpha\geq a(x,y)$, 
\begin{align*}
b(f(x), g(y)) & \leq b(f(x), h(x))+b(h(x),g(y)) \\
&\leq d^{\QQ}_{a,b}(f,h)(x, \alpha)+d^{\QQ}_{a,b}(h,g)(x,\alpha)
\end{align*} 
and thus that $d^{\QQ}_{a,b}(f,g)\leq d^{\QQ}_{a,b}(f,h)+d^{\QQ}_{a,b}(h,g)$.
This argument straightforwardly scales to the hyper-relaxed metrics, yielding:

\begin{proposition}\label{prop:relaxed}
The full subcategories of $\QQ$ made of relaxed and hyper-relaxed metrics are cartesian closed.
\end{proposition}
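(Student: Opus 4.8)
The plan is to leverage that $\QQ$ is already cartesian closed and that the two classes are \emph{full} subcategories: it then suffices to check that each is closed under the three constructions realizing the cartesian closed structure in $\QQ$ (the terminal object, binary products, and exponentials), since fullness makes the relevant universal properties restrict automatically (the $\lambda/\ev$ bijection for $\QQ$ involves only morphisms, which are unchanged in a full subcategory, once the objects $Z\times X$ and $Y^{X}$ are known to stay inside the class). The terminal QLR $(\{\star\},\{\star\},\dots)$ satisfies any relaxed or hyper-relaxed axiom vacuously, and the product $(X\times Y,Q\times R,a\times b)$ inherits it coordinatewise, because $Q\times R$ carries the pointwise order and monoid structure and the axiom holds separately for $a$ and $b$. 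Hence all the content is concentrated in the exponential $(Y^{X},R^{X\times Q},d^{\QQ}_{a,b})$.

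For the exponential I would establish the relaxed (resp.\ hyper-relaxed) triangle law of $d^{\QQ}_{a,b}$ pointwise in the quantale $R^{X\times Q}$. Fixing $f,g,h\in Y^{X}$ and a pair $(x,\alpha)$, the value $d^{\QQ}_{a,b}(f,g)(x,\alpha)$ is a supremum over the generators $b(f(x),g(y))$ and $b(f(x),f(y))$ indexed by the points $y$ of the $\alpha$-ball $\{y\mid a(x,y)\leq\alpha\}$, so it is enough to bound each generator by $d^{\QQ}_{a,b}(f,h)(x,\alpha)+d^{\QQ}_{a,b}(h,g)(x,\alpha)$ in the relaxed case, and by the same sum with the extra term $d^{\QQ}_{a,b}(h,h)(x,\alpha)=\DDer(h)(x,\alpha)$ inserted in the hyper-relaxed case. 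The self-generators $b(f(x),f(y))$ are immediate, being literally generators of $d^{\QQ}_{a,b}(f,h)(x,\alpha)$. For the cross-generators $b(f(x),g(y))$ I would route through the value $h(x)$: the triangle law of $Y$ gives $b(f(x),g(y))\leq b(f(x),h(x))+b(h(x),g(y))$, with an additional intermediate term $b(h(x),h(x))$ in the hyper-relaxed case. Here $b(h(x),g(y))$ is a generator of $d^{\QQ}_{a,b}(h,g)(x,\alpha)$ since $a(x,y)\leq\alpha$, whereas $b(f(x),h(x))$ (and $b(h(x),h(x))$) must be recognized as generators of $d^{\QQ}_{a,b}(f,h)(x,\alpha)$ (resp.\ of $\DDer(h)(x,\alpha)$) taken at the index $y=x$.

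The step requiring care, and the one I expect to be the crux, is precisely this last recognition: reading $b(f(x),h(x))$ as the generator at $y=x$ is legitimate only when $x$ itself lies in its $\alpha$-ball, i.e.\ when $a(x,x)\leq\alpha$. For reflexive (integral) QLR this is automatic since $a(x,x)=0$, but for genuinely relaxed metrics with non-zero self-distances it must be secured by the small self-distance property $a(x,x)\leq a(x,y)$, which holds in the motivating examples---for instance the extended sequence metric, where $d_{\mathsf{seq}}(x,x)\leq d_{\mathsf{seq}}(x,y)$ because the common-prefix length never exceeds the length of $x$---and which guarantees that any $y$ in the $\alpha$-ball drags $x$ into it as well. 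Granting this, the hyper-relaxed case differs from the relaxed one only in carrying the extra self-distance term $b(h(x),h(x))$ produced by the weaker triangle law of $Y$, and this term is absorbed exactly by the inserted $\DDer(h)(x,\alpha)$; this is what makes the argument scale from the relaxed to the hyper-relaxed setting, so I would write it out in full for the hyper-relaxed case and recover the relaxed one by dropping that term.
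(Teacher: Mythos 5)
Your proposal follows essentially the same route as the paper's proof: the paper also fixes $f,g,h$, takes $x,y$ with $a(x,y)\leq\alpha$, routes the cross-generator through $h(x)$ via the triangle law of $Y$, writing $b(f(x),g(y))\leq b(f(x),h(x))+b(h(x),g(y))\leq d^{\QQ}_{a,b}(f,h)(x,\alpha)+d^{\QQ}_{a,b}(h,g)(x,\alpha)$, and then simply states that the argument ``straightforwardly scales'' to the hyper-relaxed case; the terminal object, products, and the fullness argument for cartesian closure are left implicit there, exactly as you anticipated they could be.

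The difference is that you make explicit the one step the paper passes over in silence, and your worry is justified. The bound $b(f(x),h(x))\leq d^{\QQ}_{a,b}(f,h)(x,\alpha)$ reads $b(f(x),h(x))$ as the generator at index $y=x$, hence requires $a(x,x)\leq\alpha$; all one knows is $a(x,y)\leq\alpha$, so this is precisely your small-self-distance condition $a(x,x)\leq a(x,y)$ (automatic under reflexivity). For relaxed metrics in the paper's literal sense (transitivity only, reflexivity dropped) it can fail, and then so does the conclusion: take $X=\{x_0,x_1\}$ with $a(x_0,x_1)=a(x_1,x_0)=1$, $a(x_0,x_0)=2$, $a(x_1,x_1)=0$ (transitive, since $a(x_0,x_0)=2\leq a(x_0,x_1)+a(x_1,x_0)$), let $Y=\BB R$ with the Euclidean metric, $f\equiv 0$, $h(x_0)=5$, $h(x_1)=0$, $g(x_1)=10$. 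At the pair $(x_0,1)$ the ball is $\{x_1\}$, so $d^{\QQ}_{a,b}(f,g)(x_0,1)=10$ while $d^{\QQ}_{a,b}(f,h)(x_0,1)=0$ and $d^{\QQ}_{a,b}(h,g)(x_0,1)=\max\{5,5\}=5$: the relaxed triangle law fails in the exponential inherited from $\QQ$, which is the exponential the paper's proof uses. (The hyper-relaxed law survives in this example, $0+5+5\geq 10$, because the middle term $d^{\QQ}_{a,b}(h,h)=\DDer(h)$ absorbs the defect, just as in your absorption remark; but note that your uniform write-up of the hyper-relaxed case also recognizes $b(f(x),h(x))$ and $b(h(x),h(x))$ at the index $y=x$, so it, too, officially invokes $a(x,x)\leq\alpha$.) In sum, your proof is correct under the small-self-distance reading of ``relaxed''---which holds in the paper's motivating examples, such as the extended sequence metric---and the caveat you isolate is not a defect of your write-up but a genuine gap in the paper's own one-line argument for the relaxed half of the statement.
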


An immediate consequence is that the distance $d$ from Remark \ref{rem:distances} is a relaxed metric. 
We will show below that we cannot actually say \emph{more} of $d$: it is not a partial metric.

\subsection{Ultra-metrics}

For all metric spaces $(X,Q,a)$ and $(Y,R,b)$, whenever $R$ satisfies $\alpha+\beta=\alpha\vee \beta$ (or, equivalently, $\alpha=\alpha+\alpha$ and $0=\bot$), it is not difficult to check that the transitivity axiom lifts to the exponential in $\QQ$: in fact, for all $f,g,h\in Y^{X}$ and $x,y\in X$ with $a(x,y)\leq \alpha$ one has 
\begin{align*}
b(f(x),g(y)) & \leq b(f(x),h(x)) \vee b(h(x),g(y)) \\
& \leq d^{\QQ}_{a,b}(f,h)(x,\alpha) \vee d^{\QQ}_{a,b}(h,g)(x,\alpha)
\end{align*}
from which we deduce $d^{\QQ}_{a,b}(f,g)(x,\alpha)\leq d^{\QQ}_{a,b}(f,h)(x,\alpha) \vee d^{\QQ}_{a,b}(h,g)(x,\alpha)$.
A similar argument can be developed for the distance $d^{\QQR}_{a,b}$, leading to:

\begin{proposition}\label{prop:ultra}
The full subcategories of $ \QQRU$ and $\QQU$ made of ultra-metric spaces and partial ultra-metric spaces are cartesian closed.
\end{proposition}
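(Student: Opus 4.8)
The goal is to establish Proposition~\ref{prop:ultra}, which asserts that ultra-metric spaces and partial ultra-metric spaces form cartesian closed subcategories of $\QQU$ and $\QQRU$. Since $\QQU$ and $\QQRU$ are already known to be cartesian closed (by the discussion following Lemma~\ref{prop:symmetric0}), the only thing that needs checking is that the full subcategories in question are \emph{closed} under the operations inducing the cartesian closed structure, namely the cartesian product and the exponential. Closure under products is immediate: a finite product of idempotent (locale-valued) quantales is again a locale, so the product of two (partial) ultra-metric spaces is a (partial) ultra-metric space. The substance of the proof therefore lies entirely in verifying that the exponential QLR of two (partial) ultra-metric spaces is again a (partial) ultra-metric space.

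The plan is to handle the two cases (ultra and partial-ultra) in parallel, since they differ only in whether self-distances vanish and in which transitivity law is used. For the non-partial ultra-metric case in $\QQU$, given ultra-metric spaces $(X,Q,a)$ and $(Y,R,b)$ with $R$ a locale, I would verify the three requirements on the exponential $(Y^X, R^{X\times Q}, d^{\QQ}_{a,b})$: first, that $R^{X\times Q}$ is again a locale (pointwise idempotency and $0=\bot$ are inherited componentwise from $R$); second, symmetry, which is exactly Lemma~\ref{prop:symmetric0}; and third, the ultra-transitivity axiom. The transitivity verification is precisely the computation displayed immediately before the proposition statement: for $f,g,h\in Y^X$ and $a(x,y)\le\alpha$ one uses the ultra-triangle inequality of $b$ to get $b(f(x),g(y))\le b(f(x),h(x))\vee b(h(x),g(y))$, bounds each term by the appropriate exponential distance, and takes the supremum to conclude $d^{\QQ}_{a,b}(f,g)(x,\alpha)\le d^{\QQ}_{a,b}(f,h)(x,\alpha)\vee d^{\QQ}_{a,b}(h,g)(x,\alpha)$. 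The reflexivity axiom, when relevant, follows from Eq.~\eqref{eq:law} together with the fact that $\DDer(f)(x,\alpha)$ sits below its relevant self-join.

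For the reflexive (partial) case in $\QQRU$, I would work with the distance $d^{\QQR}_{a,b}(f,g)=d^{\QQ}_{a,b}(f,g)\ResH\DDer(f)$. Here the new ingredient is that the Heyting residuation $\ResH$, which in a locale coincides with the residuation $\beta\ResH\gamma=\bigwedge\{\delta\mid\gamma\vee\delta\ge\beta\}$, must be shown to interact well with $\vee$-transitivity. The reflexivity $d^{\QQR}_{a,b}(f,f)=\DDer(f)\ResH\DDer(f)=0$ is already recorded in the text. For transitivity I would combine the $\QQ$-level ultra-transitivity established above with the order-theoretic behaviour of $\ResH$ over a locale; the key auxiliary facts are that in a locale $(\beta\vee\beta')\ResH\gamma=(\beta\ResH\gamma)\vee(\beta'\ResH\gamma)$ and that residuation is antitone in its second argument, allowing the self-derivative subtractions to be distributed across the join. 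Symmetry in the reflexive case is inherited from the symmetry of $d^{\QQ}_{a,b}$ together with the fact that $\DDer(f)=d^{\QQ}_{a,b}(f,f)$ is symmetric.

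The main obstacle I anticipate is the transitivity verification for $\QQRU$, where one must show that the $\ResH$-corrected distance still satisfies the ultra-triangle law. Unlike the plain $\QQ$ case, the correction term $\DDer(f)$ appearing in $d^{\QQR}_{a,b}(f,g)$ differs from the correction $\DDer(h)$ for $d^{\QQR}_{a,b}(h,g)$, so the naive chaining does not immediately line up; one must exploit the locale structure (so that $\ResH$ is genuinely a right-adjoint to $\vee$, via property \eqref{eq:heyting} and the fact that all locales are Heyting) and the monotonicity relations between the derivatives $\DDer(f),\DDer(h)$ to reconcile the mismatched correction terms. Everything else is routine inheritance of algebraic properties (idempotency, $0=\bot$, symmetry) from $R$ to the function space $R^{X\times Q}$, and I would dispatch those verifications quickly before concentrating effort on the reflexive transitivity computation.
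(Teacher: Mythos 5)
Your overall architecture tracks the paper's: reduce everything to closure of the exponential, get $\vee$-transitivity at the level of $d^{\QQ}_{a,b}$ from the pointwise computation displayed before the proposition, and treat the reflexive case via the adjunction between $\ResH$ and $\vee$. But there is a genuine gap precisely at the step you flag as the main obstacle, and the tool you name to close it does not exist. Distributing the residuation gives $(d^{\QQ}_{a,b}(f,h)\vee d^{\QQ}_{a,b}(h,g))\ResH \DDer(f)\leq d^{\QQR}_{a,b}(f,h)\vee\bigl(d^{\QQ}_{a,b}(h,g)\ResH \DDer(f)\bigr)$, and to turn the leftover term into $d^{\QQR}_{a,b}(h,g)=d^{\QQ}_{a,b}(h,g)\ResH \DDer(h)$ by antitonicity of $\ResH$ in its second argument you would need $\DDer(h)\leq \DDer(f)$; but $\DDer(f)$ and $\DDer(h)$ are incomparable in general (take $f$ constant and $h$ not), so there is no ``monotonicity relation between the derivatives'' to exploit. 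The missing idea is the inequality $\DDer(h)\leq d^{\QQ}_{a,b}(f,h)$, valid for symmetric, reflexive, locale-valued QLR: $b(h(x),h(y))\leq b(h(x),f(x))\vee b(f(x),h(y))$, where $b(h(x),f(x))=b(f(x),h(x))$ lies in the sup defining $d^{\QQ}_{a,b}(f,h)(x,\alpha)$ at $y=x$ (using $a(x,x)=0\leq\alpha$). With it, $d^{\QQR}_{a,b}(f,h)\vee d^{\QQR}_{a,b}(h,g)\vee\DDer(f)\geq d^{\QQ}_{a,b}(f,h)\vee d^{\QQR}_{a,b}(h,g)\geq d^{\QQ}_{a,b}(f,h)\vee\DDer(h)\vee d^{\QQR}_{a,b}(h,g)\geq d^{\QQ}_{a,b}(f,h)\vee d^{\QQ}_{a,b}(h,g)\geq d^{\QQ}_{a,b}(f,g)$, and a single application of the adjunction yields $d^{\QQR}_{a,b}(f,g)\leq d^{\QQR}_{a,b}(f,h)\vee d^{\QQR}_{a,b}(h,g)$. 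This is in substance how the paper proceeds: it proves one pointwise inequality of the shape $\bigl(\DDer(f)\vee(d^{\QQ}_{a,b}(f,h)+d^{\QQ}_{a,b}(h,g))\bigr)(x,\alpha)\geq d^{\QQ}_{a,b}(f,g)(x,\alpha)$, using $+=\vee$ in a locale, and residuates once at the end, never distributing $\ResH$ over the join.

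A secondary but real problem is that you have the pairing in the statement backwards: the proposition puts ultra-metric spaces (with $a(x,x)=0$) inside $\QQRU$, where the exponential is $d^{\QQR}_{a,b}$, and partial ultra-metric spaces inside $\QQU$, where the exponential is $d^{\QQ}_{a,b}$. This matters: by Eq.~\eqref{eq:law}, $d^{\QQ}_{a,b}(f,f)=\DDer(f)$ is generally nonzero, so the ultra-metric spaces are \emph{not} closed under the $\QQU$-exponential as your first case assumes --- that exponential is only a partial ultra-metric. Your $d^{\QQ}$-computation is the right one for the partial case, but there the transitivity axiom is stated with respect to the diagonal composition $+_{ty}$, so you must also invoke Remark~\ref{rem:loca} that in a locale $+_{ty}$ collapses to $\vee$ (and check the diagonality/reflexivity conditions); the paper's proof closes with exactly this observation.
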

\longv{\begin{proof}
Let $(X,Q,a), (Y,R,b)$ be objects of $\QQRU$. It suffices to show that the QLR $Y^{X}$ satisfies transitivity. Since $R$ is a locale, $\alpha+ \beta=\alpha\vee \beta$ holds for all $\alpha,\beta\in R$. Let $f,g,h\in Y^{X}$. Then we have that 
$\DDer(f)\vee (d^{\QQ}_{a,b}(f,h)+d^{\QQ}_{a,b}(h,g))=(\DDer(f)\vee d^{\QQ}_{a,b}(f,h))\vee d^{\QQ}_{a,b}(h,g)$, so in particular for all $x,y\in X$ and $\alpha\geq a(x,y)$, 
$(\DDer(f)\vee (d^{\QQ}_{a,b}(f,g)+d^{\QQ}_{a,b}(h,g)))(x,\alpha) = ((\DDer(f)\vee d^{\QQ}_{a,b}(f,g))(x,\alpha)) \vee((\DDer(f)\vee d^{\QQ}_{a,b}(h,g))(x,\alpha))\geq 
d^{\QQ}_{a,b}(f(x), h(x))\vee d^{\QQ}_{a,b}(h(x),g(y)) \geq d^{\QQ}_{a,b}(f(x),g(y))$, from which we deduce that 
$(d^{\QQ}_{a,b}(f,g)+d^{\QQ}_{a,b}(h,g))(x,\alpha) \ResH \DDer(f)(x,\alpha)\geq( d^{\QQ}_{a,b}(f,g)(x,\alpha))\ResH (\DDer(f) (x,\alpha))=d^{\QQR}_{a,b}(f,g)(x,\alpha)$.

A similar argument can be developed for $\QQU$, using the fact that in a locale $\alpha+_{\gamma}\beta=\alpha\vee \beta$.
\end{proof}
}

When $Q$ is a locale, also the category $\Met_{Q}$ is cartesian closed \cite{Smyth}.
These categories have been mostly used to account for \emph{intensional} properties of higher-order programs (e.g.~measuring program approximations or the number of computation steps \cite{Escardo1999}).
In the categories $\QQRU$ and $\QQU$ we can define metrics describing more \emph{extensional} properties (i.e.~measuring distances between program outputs) as the one below.
\begin{example}\label{ex:intervals}
Let $\C I(\BB R)$ be the complete lattice of \emph{closed intervals} $[x,y]$ (where $x,y\in \BB R$ and $x\leq y$), enriched with $\emptyset$ and $\BB R$. We can define a partial ultra-metric $u: \BB R\times \BB R\to \C I(\BB R)$ by letting $u(x,y)=[\min\{x,y\}, \max\{x,y\}]$. 

The metric $u$ lifts in $\QQU$ to a partial ultra-metric $d^{\QQ}_{u,u}$ over real-valued functions where, for all $x\in \BB R$ and $I\in \C I(\BB R)$, $d^{\QQ}_{u,u}(f,g)(x,I)$ is the smallest interval containing all $f(y)$ and $g(y)$, for $y\in I\vee\{x\}$ (see also \cite{Geoffroy2020}). 
\end{example}

We now establish a sort of converse to Proposition \ref{prop:ultra}: under suitable conditions, if the exponential of two metric spaces $X$ and $Y$ satisfies the transitivity axiom, then the distances over $Y$ are idempotent\shortv{:}\longv{.

Let us first recall the notion of \emph{injective} metric space \cite{Espinola:2001aa,CLEMENTINO20063113}, that will be essential in our argument.
A map $f:X\to X$ between two metric spaces $(X,Q,a), (Y,Q,b)$ over the \emph{same} quantale is said an \emph{extension} if for all $x,y\in X$, $b(f(x),f(y))=a(x,y)$, and is said \emph{non-expansive} if for all $x,y\in X$, $b(f(x),f(y))\leq a(x,y)$.

A metric space $(X,Q,a)$ is \emph{injective} when for all non-expansive map $f: Y\to X$ and extension $e:Y\to Z$ there exists a non-expansive map $h:Z\to X$ such that $f=h\circ e$.

Injective metric spaces (also known as \emph{hyperconvex} metric spaces) enjoy several nice properties (see \cite{Espinola:2001aa}). In particular, they form a cartesian closed subcategory of $\Met$ \cite{CLEMENTINO20063113}, which includes the Euclidean metric. 
Here we will use such spaces to establish a few negative results.}

\begin{lemma}\label{lemma:trivialmetric}
\begin{itemize}
\item[i.]Let $(X,Q,a)$ and $(Y,R,b)$ be two metric spaces, where $X$ has at least two distinct points and $Y$ is \emph{injective}\shortv{ (\cite{Espinola:2001aa,CLEMENTINO20063113})}. 
If the reflexive QLR $(Y^{X}, R^{X\times Q}, d^{\QQR}_{a,b})$ is a metric space then for all $\alpha,\beta\in R$ such that $\alpha+\beta\in Im(b)$, $\alpha+\beta=\alpha\vee \beta$.
\item[ii.] Let $(X,Q,a)$ and $(Y,R,b)$ be two partial metric spaces, where $X$ has at least two distinct points and $Y$ is injective. 
If the QLR $(Y^{X}, R^{X\times Q}, d^{\QQ}_{a,b})$ is a partial metric space then for all $\alpha,\beta\in R$ such that $\alpha+\beta\in Im(b)$, $\alpha+\beta=\alpha\vee \beta$.
\end{itemize}
\end{lemma}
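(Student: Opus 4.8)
The plan is to argue directly: from two distinct points of $X$ and the hypothesis $\alpha+\beta\in Im(b)$, I build a handful of test functions whose (partial-)transitivity inequality in the exponential QLR, evaluated at a single pair $(x,\gamma)$, collapses --- thanks to the residuation defining $d^{\QQR}_{a,b}$ (resp.\ to the diagonal composition of the partial metric) --- into the inequality $\alpha\vee\beta\geq\alpha+\beta$. Since $R$ is integral, $\alpha+\beta\geq\alpha\vee\beta$ always holds, so this yields idempotency. The whole point is to arrange one ``leg'' of the triangle inequality to be annihilated by the self-distance discount, while the direct distance still records the full value $\alpha+\beta$.

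First I would fix the geometry. Let $v_0,v_1\in X$ be distinct and set $\rho=a(v_0,v_1)$, so that both $v_0$ and $v_1$ lie in the ball $\{y\mid a(v_0,y)\leq\rho\}$ over which $d^{\QQ}_{a,b}(\cdot,\cdot)(v_0,\rho)$ takes its supremum (note $a(v_0,v_0)=0\leq\rho$). Since $\alpha+\beta\in Im(b)$ I pick $p,q\in Y$ with $b(p,q)=\alpha+\beta$. The essential use of injectivity of $Y$ is interpolation: applying the extension property to the isometric inclusion $\{p,q\}\hookrightarrow\{p,m,q\}$ of the three-point space with $b(p,m)=\alpha$, $b(m,q)=\beta$, $b(p,q)=\alpha+\beta$, I obtain $m\in Y$ with $b(p,m)\leq\alpha$ and $b(m,q)\leq\beta$. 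For part ii I would likewise produce a point $q''$ with $b(m,q'')=\alpha+\beta$ and $b(p,q'')\leq\alpha+\beta$. In each case I make the test functions constant off $\{v_0,v_1\}$, so the suprema are unaffected by the other points of the ball.

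For part i I would take $f$ sending $v_0\mapsto p$, $v_1\mapsto m$ (and everything else to $p$), $c$ the constant function $m$, and $h$ the constant function $q$. At $(v_0,\rho)$ one checks $\DDer(f)=b(p,m)$ and $d^{\QQ}_{a,b}(f,c)=b(p,m)$, so $d^{\QQR}_{a,b}(f,c)=b(p,m)\ResH b(p,m)=0$, while $d^{\QQR}_{a,b}(c,h)=b(m,q)$ and $d^{\QQR}_{a,b}(f,h)=(\alpha+\beta)\ResH b(p,m)$. Transitivity of the exponential then reads $(\alpha+\beta)\ResH b(p,m)\leq b(m,q)$, which by the Heyting adjunction ($\delta\geq\gamma\ResH\eta\iff\eta\vee\delta\geq\gamma$) is exactly $b(p,m)\vee b(m,q)\geq\alpha+\beta$; since $b(p,m)\leq\alpha$ and $b(m,q)\leq\beta$ give $b(p,m)\vee b(m,q)\leq\alpha\vee\beta$, idempotency follows. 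For part ii I would instead take $f$ the constant function $m$, $g$ sending $v_0\mapsto p$, $v_1\mapsto q$, and $h$ the constant function $q''$. Here $m$ being the interpolant makes $d^{\QQ}_{a,b}(f,g)(v_0,\rho)=b(m,p)\vee b(m,q)$ a \emph{join} rather than a sum, while $\DDer(g)=d^{\QQ}_{a,b}(g,h)=\alpha+\beta$ forces the diagonal term $d^{\QQ}_{a,b}(g,h)\Res\DDer(g)$ to vanish and $d^{\QQ}_{a,b}(f,h)=b(m,q'')=\alpha+\beta$. The partial-metric transitivity $d^{\QQ}_{a,b}(f,g)+\big(d^{\QQ}_{a,b}(g,h)\Res\DDer(g)\big)\geq d^{\QQ}_{a,b}(f,h)$ then collapses to $b(m,p)\vee b(m,q)\geq\alpha+\beta$, and idempotency follows as above.

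I expect the main obstacle to be precisely the design that makes exactly one leg vanish under the self-distance discount: it is the appearance of the midpoint distances as the \emph{join} $b(m,p)\vee b(m,q)$, together with the residuation identities, that converts transitivity into a comparison of $\vee$ with $+$. A secondary subtlety is that injectivity only produces the interpolants up to inequalities ($b(p,m)\leq\alpha$, $b(m,q)\leq\beta$), so I must check --- as I did above --- that the weaker conclusion $b(p,m)\vee b(m,q)\geq\alpha+\beta$ still upgrades to $\alpha\vee\beta\geq\alpha+\beta$. Extracting the interpolating points $m$ and $q''$ from the bare extension-property definition of injectivity, and confirming that this interplay between $\DDer$ and the two residuals behaves identically in the reflexive ($\QQR$, part i) and the partial-metric ($\QQ$, part ii) settings, is where the real work lies; the supremum computations themselves are routine.
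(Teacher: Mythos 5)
Your part~i is correct, and it is essentially the paper's own proof: up to renaming, your three test functions ($f$ sending $v_1\mapsto m$ and everything else to $p$, the constant function at $m$, the constant function at $q$) are exactly the paper's $f,h,g$, and the injectivity step (retracting a one-point extension to get an interpolant with $b(p,m)\leq\alpha$, $b(m,q)\leq\beta$) is the same. The only cosmetic difference is bookkeeping: you compute $d^{\QQR}_{a,b}(f,h)=(\alpha+\beta)\ResH b(p,m)$ exactly and unwind the Heyting adjunction at the end, whereas the paper applies the unit $\gamma\leq\eta\vee(\gamma\ResH\eta)$ up front and chains through $\DDer(f)\vee(d^{\QQR}_{a,b}(f,h)+d^{\QQR}_{a,b}(h,g))$; both uses require $R$ Heyting, which holds in the $\QQR$ setting.

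Part~ii, however, has a genuine gap: the point $q''$ with $b(m,q'')=\alpha+\beta$ \emph{exactly} cannot be produced. Injectivity only yields non-expansive retractions, so any point you manufacture this way comes with an upper bound $b(m,q'')\leq\alpha+\beta$, never an equality --- you noticed this caveat for $m$, but for $q''$ your argument needs the exact value, and that demand is unfixable within your construction. Concretely, take $Y=[0,1]$ with the Euclidean metric (a hyperconvex, hence injective, space, and a legitimate partial metric space with zero self-distances), $\alpha=\beta=1/2$, $p=0$, $q=1$: any interpolant is forced to be $m=1/2$, and no point of $Y$ lies at distance $1$ from $m$. With only $b(m,q'')\leq\alpha+\beta$, your final inequality collapses to $b(m,p)\vee b(m,q)\geq b(m,q'')$, which is vacuous. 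The paper avoids this trap by keeping the exact value $\alpha+\beta$ on the pair $u_0,u_2$ that exists by the hypothesis $\alpha+\beta\in Im(b)$: it takes $f$ constantly $u_0$, $g$ constantly $u_1$ except $g(x_1)=u_2$, so that $d^{\QQ}_{a,b}(f,g)(x_0,a(x_0,x_1))=b(u_0,u_2)=\alpha+\beta$ sits on the \emph{direct} side of the triangle, and then rigs the midpoint $h$ (constantly $u_1$ except $h(x_1)=u_0$, i.e., agreeing with $f$ at $x_1$) so that $d^{\QQ}_{a,b}(f,h)=d^{\QQ}_{a,b}(h,h)=b(u_0,u_1)$; the diagonal residual $d^{\QQ}_{a,b}(f,h)\Res d^{\QQ}_{a,b}(h,h)$ then vanishes on the \emph{first} leg, leaving $d^{\QQ}_{a,b}(h,g)\leq\alpha\vee\beta$ on the other. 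In short: annihilate the leg adjacent to $f$ by making the midpoint function coincide with $f$ at one argument, rather than annihilating the other leg by manufacturing a new exact distance from the interpolant --- the latter is exactly what injectivity cannot deliver.
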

\longv{
\begin{proof}
\begin{itemize}
\item[i.]
Let $\alpha,\beta\in R$ and $u_{0},u_{2}\in Y$ be such that $b(u_{0},u_{2})=\alpha+\beta$.
Let $Y'=Y\cup \{v_{1}\}$ and $b'$ be as $b$ on $Y$ and satisfying $b(u_{0},v_{1})=\alpha$, 
$b(v_{1},u_{2})=\beta$. The injection $\iota:Y\to Y'$ is an expansion, hence, since $Y$ is injective, there exists a non-expansive function $f: Y'\to Y$ such that $f\circ \iota= \mathrm{id}_{Y}$. This implies in particular that, by letting $u_{1}:= f(v_{1})$, $b(u_{0},u_{1})\leq\alpha$, $b(u_{1},u_{2})\leq \beta$.

Let now $x_{0},x_{1}$ be two distinct points in $X$ and let 
$f,g,h:X\to Y$ be the following functions: 
$f(x)$ is constantly $u_{0}$ except for $f(x_{1})=u_{1}$; $g(x)$ is constantly $u_{2}$ and $h(x)$ is constantly $u_{1}$. 
We have then that $\DDer(f)(x,a(x_{0},x_{1}))\leq \alpha$, $\DDer(g)=\DDer(h)=0$. Moreover, 
for all $x'\in X$ with 
 $a(x_{0},x')\leq a(x_{0},x_{1})$, $b(f(x_{0}),f(x')), b(f(x_{0}),h(x'))\leq b(u_{0},u_{1})\leq \DDer(f)(x,a(x_{0},x_{1}))=\DDer(f)(x_{0},a(x_{0},x_{1})) \vee 0
$, that is $d^{\QQ}_{a,b}(f,h)(x_{0},a(x_{0},x_{1}))\leq \DDer(f)(x,a(x_{0},x_{1}))$, and thus \\ $d^{\QQR}_{a,b}(f,h)(x,a(x_{0},x_{1}))=d^{\QQ}_{a,b}(f,h)\ResH \DDer(f))(x_{0},a(x_{0},x_{1}))\leq 0$


Then, since by hypothesis $e_{a,b}$ is a metric, we deduce that 
\begin{align*}
\alpha+\beta& =
b(u_{0},u_{2})=b(f(x_{0}),g(x_{1}))  \\
&
\leq d^{\QQ}_{a,b}(f,g)(x_{0},a(x_{0},x_{1})) \\
&
\leq \big (\DDer(f)\vee d^{\QQR}_{a,b}(f,g)\big )(x_{0},a(x_{0},x_{1})) \\
&
\leq\big (\DDer(f)\vee(d^{\QQR}_{a,b}(f,h)+d^{\QQR}_{a,b}(h,g))\big )(x_{0},a(x_{0},x_{1}))\\
&\leq 
\alpha \vee (0+\beta)= \alpha\vee \beta
\end{align*}

\item[ii.]
As in the proof of point i.~let $\alpha,\beta\in R$ and $u_{0},u_{1},u_{2}\in Y$ be such that $b(u_{0},u_{1})\leq \alpha$, $b(u_{1},u_{2})\leq \beta$ and 
$b(u_{0},u_{2})=\alpha+\beta$. We can suppose w.l.o.g. that $b$ is symmetric.

Let now $x_{0},x_{1}$ be two distinct points in $X$ and let 
$f,g,h:X\to Y$ be the following functions: 
$f(x)$ is constantly $u_{0}$, $h(x)$ is constantly $u_{1}$ except for $h(x_{1})=u_{0}$ and $g(x)$ is constantly $u_{1}$ except for $g(x_{1})=u_{2}$. 
Then we have that $d^{\QQ}_{a,b}(f,g)(x_{0},a(x_{0},x_{1}))=b(u_{0},u_{2})=\alpha+\beta$, 
$d^{\QQ}_{a,b}(f,h)(x_{0},a(x_{0},x_{1}))=d^{\QQ}_{a,b}(h,h)(x_{0},a(x_{0},x_{1}))=b(u_{0},u_{1})\leq \alpha$ and 
$d^{\QQ}_{a,b}(h,g)(x_{0},a(x_{0},x_{1}))= b(u_{0},u_{1})\vee b(u_{1},u_{2})\leq \alpha \vee \beta$. 

Then, since by hypothesis $d^{\QQ}_{a,b}$ is a partial metric, we deduce that 

\noindent
\adjustbox{scale=0.95}{
\begin{minipage}{\linewidth}
\begin{align*}
\alpha+\beta& =
b(u_{0},u_{2})=b(f(x_{0}),g(x_{1}))  \\
& \leq d^{\QQ}_{a,b}(f,g)(x_{0},a(x_{0},x_{1}))\\
 & \leq \big ((d^{\QQ}_{a,b}(f,h)\Res d^{\QQ}_{a,b}(h,h))+d^{\QQ}_{a,b}(h,g)\big )(x_{0},a(x_{0},x_{1}))
\\
&
= \big ((d^{\QQ}_{a,b}(h,h)\Res d^{\QQ}_{a,b}(h,h))+d^{\QQ}_{a,b}(h,g)\big )(x_{0},a(x_{0},x_{1})) \\
&
= d^{\QQ}_{a,b}(h,g)(x_{0},a(x_{0},x_{1}))
\leq 
\alpha \vee \beta
\end{align*}
\end{minipage}
}
\end{itemize}
\end{proof}

}


To give the reader an \shortv{idea of the proof}\longv{illustration} of Lemma \ref{lemma:trivialmetric}, we \shortv{illustrate}\longv{show} in Fig.~\ref{fig:counterexamples} counter-examples to transitivity for the na\"ive extensions of the Euclidean metric (cf.~Remark \ref{rem:distances}).

\longv{
Along similar lines we can also prove Lemma \ref{prop:symmetric} from the previous section.

\begin{proof}[Proof of Lemma~\ref{prop:symmetric}]

Let $\alpha\in R$ and $x_{1},x_{2}\in Y$ be such that $b(x_{1},x_{2})=\alpha+\alpha$. Let $(Z,R,c)$ be a metric space where  $Z=X\cup\{u_{0},u_{3}\}$ and $c$ is defined so that $c(u_{0},u_{0})=c(u_{3},u_{3})=0$ and the following hold:
\begin{align*}
c(u_{0},u_{1}), c(u_{0},u_{2}), c(u_{0},u_{3})& =\alpha \\
c(u_{1},u_{2}), c(u_{2},u_{3}), c(u_{3},u_{1}) & = \alpha+\alpha
\end{align*}
Since $Y$ is injective, there exists a non-expansive map $f:Y\to X$ such that $f\circ \iota=\mathrm{id}_{X}$, where $\iota$ is the injection $\iota: X\to Z$ (which is obviously an expansion). Hence there exist points $x_{0},x_{3}\in X$ such that 
$ b(x_{0},x_{1}), b(x_{0},x_{2}), b(x_{0},x_{3}) =\alpha$ and 
$b(x_{1},x_{2}), b(x_{2},x_{3}), b(x_{3},x_{1}) \leq \alpha+\alpha$. 

Let $f,g\in Y^{X}$ be defined by 
\begin{align*}
f(w)=
\begin{cases}
x_{1} & \text{ if } w=v_{0} \\
x_{2} & \text{ otherwise}
\end{cases}
\qquad
g(w)=
\begin{cases}
x_{3} & \text{ if } w=v_{0} \\
x_{4} & \text{ otherwise}
\end{cases}
\end{align*}
where $v_{0},v_{1}$ are two distinct points of $X$ such that $a(v_{0},v_{1})\neq 0$. 
%
%
%
If $d^{\QQ}_{a,b}(f,g)=d^{\QQ}_{a,b}(g,f)$, we deduce that 
\begin{align*}
\alpha & \geq
 \sup\{ b(f(v_{0}),f(w)), b(f(v_{0}),g(w))\mid a(v_{0},w)\leq a(v_{0},v_{1})\}\\
 & =
d^{\QQ}_{a,b}(f,g)(v_{0}, a(v_{0},v_{1}))\\
& =d^{\QQ}_{a,b}(g,f)(v_{0},a(v_{0},v_{1}))
\\ & =
 \sup\{ b(g(v_{0}),g(w)),b(g(v_{0}),f(w))\mid a(v_{0},w)\leq a(v_{0},v_{1})\}\\
 & =
 \alpha+\alpha
 \end{align*}
\end{proof}

}


%

\begin{figure*}
\fbox{
\begin{subfigure}{0.47\textwidth}
\begin{center}
\begin{tikzpicture}[domain=-2:2]
\draw[dashed, |<->|]   (-2,0) -- (2,0);
\draw[<->] (0,-0.4) -- (0,3); 

    \node at (0,0)[circle,fill,inner sep=1pt]{};
\node(z) at (0.2,-0.2) {$x$};
\node(-e) at (-2,-0.2) {$x-r$};
\node(e) at (2,-0.2) {$x+r$};

\node(f) at (0,2.6)[circle,fill,inner sep=1pt]{};
\node(f) at (0,1.45)[circle,fill,inner sep=1pt]{};
\node(f) at (0,0.3)[circle,fill,inner sep=1pt]{};

\node(f) at (-0.2,2.4) {\tiny$g(x)$};

\node(f) at (-0.2,1.6) {\tiny$h(x)$};

\node(f) at (-0.2,0.5) {\tiny $f(x)$};

\node(ff) at (1.3,2.3) {{$g$}};
\node(gg) at (1.2,1.1) {{$h$}};
\node(gg) at (1,0.45) {{$f$}};

\draw[color=orange] plot (\x,{2.25+0.8*(-cos(\x r))}) ;
\draw[color=red] plot (\x,{0.65+0.8*(cos(\x r))}) ;
\draw[color=blue] plot (\x,{0.3}) ;

\draw[dotted] (0,0.3) -- (2.4,0.3);
\draw[dotted] (0,1.45) -- (2.4,1.45);
\draw[dotted] (0,2.6) -- (2.4,2.6);

%

\draw[dashed,|<->| ] (2.4,2.6) -- node[right] {\tiny$d(h,g)$} (2.4,1.45);
\draw[dashed,|<->| ] (2.4,1.45) -- node[right] {\tiny$d(f,h)=d(h,h)$} (2.4,0.3);
\draw[dashed,|<->| ] (2.8,2.6) -- node[right] {\tiny$d(f,g)$} (2.8,0.3);




\end{tikzpicture}
\end{center}
\caption{\small The distance $d$ from Remark \ref{rem:distances} is not a partial metric. 
 The example above shows that $d(f,g)> d(f,h)+d(h,g)- d(h,h)$ (with all distances computed in $(x,r)$). A similar example can be found in~\cite{Geoffroy2020}. \\ \ \longv{\\ \ } }
\end{subfigure}
\ \ \ 
\begin{subfigure}{0.47\textwidth}
\begin{center}
\begin{tikzpicture}[domain=-2:2]
\draw[dashed, |<->|]   (-2,0) -- (2,0);
\draw[<->] (0,-0.4) -- (0,3); 

    \node at (0,0)[circle,fill,inner sep=1pt]{};
\node(z) at (0.2,-0.2) {$x$};
\node(-e) at (-2,-0.2) {$x-r$};
\node(e) at (2,-0.2) {$x+r$};

\node(f) at (0,2.6)[circle,fill,inner sep=1pt]{};
\node(f) at (0,1.45)[circle,fill,inner sep=1pt]{};
\node(f) at (0,0.3)[circle,fill,inner sep=1pt]{};

\node(f) at (-0.2,2.4) {\tiny$g(x)$};

\node(f) at (-0.2,1.6) {\tiny$h(x)$};

\node(f) at (-0.2,0.5) {\tiny $f(x)$};

\node(ff) at (1.3,2.3) {{$g$}};
\node(gg) at (1.2,1.1) {{$h$}};
\node(gg) at (1,0.45) {{$f$}};

\draw[color=blue] plot (\x,{1.1+0.8*(-cos(\x r))}) ;
\draw[color=orange] plot (\x,{1.8+0.8*(cos(\x r))}) ;
\draw[color=red] plot (\x,{1.45}) ;

\draw[dotted] (0,0.3) -- (2.4,0.3);
\draw[dotted] (0,1.45) -- (2.4,1.45);
\draw[dotted] (0,2.6) -- (2.4,2.6);

%

\draw[dashed,|<->| ] (2.4,2.6) -- node[right] {\tiny$d(h,g)$} (2.4,1.45);
\draw[dashed,|<->| ] (2.4,1.45) -- node[right] {\tiny$d(f,f)=d(f,h)$} (2.4,0.3);
\draw[dashed,|<->| ] (2.8,2.6) -- node[right] {\tiny$d(f,g)$} (2.8,0.3);




\end{tikzpicture}
\end{center}
\caption{\small The distance $e$ from Remark \ref{rem:distances} is not a metric. 
In the example above (with all values computed in $(x,r)$), $e(f,h)=0$, since each $h(y)$ is no farther from $f(x)$ than $f(x+r)$, $e(h,g)=d(h,g)$ and $e(f,g)$ is $d(f,f)+d(f,g)$. Hence transitivity fails since 
 $e(f,g)= d(f,h)+d(h,g)  >
0+d(h,g)= 
e(f,h)+e(h,g)$.}
\end{subfigure}
}
\caption{The distances $d$ and $e$ from Remark \ref{rem:distances} do not satisfy the transitivity axioms of metric and partial metric spaces.}
\label{fig:counterexamples}
\end{figure*}
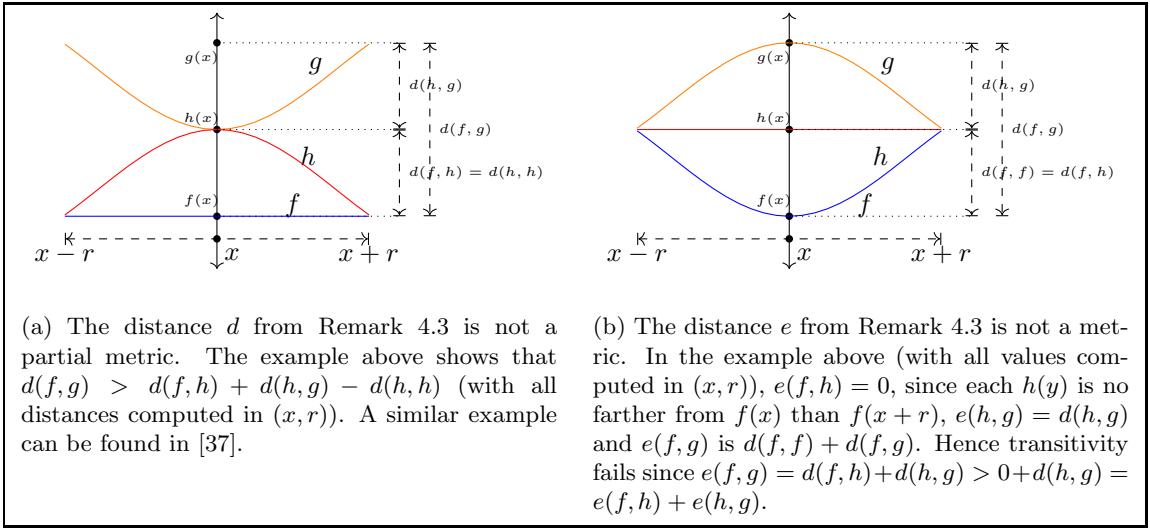

\subsection{Decomposing Partial Metrics through Valuations}

Lemma~\ref{lemma:trivialmetric} suggests that one cannot hope to lift the Euclidean metric to all simple types inside $\QQ$ or $\QQR$. 
Nevertheless, we will show that the Euclidean metric, as well as many other non-idempotent metrics and partial metrics, can be lifted to all simple types inside the categories $\QQU$ and $\QQRU$, by exploiting a well-investigated connection between partial metrics and lattice-valued metrics.

A basic intuition comes from the observation that the Euclidean distance can be \emph{decomposed} as
$$
\begin{tikzcd}
\BB R\times \BB R \ar{r}{u}  & \C I(\BB R) \ar{r}{\mu} &  \BB R^{+}_{\geq 0} 
\end{tikzcd}
$$
where $u$ is the partial ultra-metric from Example \ref{ex:intervals} and $\mu$ is the Lebesgue measure. 
This observation can be generalized using the theory of \emph{valuations} \cite{ONeill, 10.1007/BFb0053546,10.1016/j.tcs.2003.11.016}.


A \emph{join-valuation} \cite{10.1016/j.tcs.2003.11.016} on a join semi-lattice $L$ is a monotone function $\mathcal F: L\to \BB R^{+\infty}_{\geq 0}$ which satisfies the condition 
\begin{equation}\label{eq:submodular}
\C F( a \lor b) \leq \C F(a) + \C F(b) -\C F(a\land b)
\end{equation}
for all $a,b$ such that  $a\land b$ exists in $L$. When $L$ is a $\sigma$-algebra, join-valuations on $L$ are thus sort of relaxed measures on $L$.

Any join-valuation $\C F:L\to \BB R^{+\infty}_{\geq 0}$ induces a join semi-lattice $L_{\C F}$ obtained by quotienting $L$ under the equivalence  
$$
a \simeq_{\C F}b \ \text{iff} \ (a\leq b \text{ or } b\leq a) \text{ and }\C F(a)=\C F(b) 
$$
One can obtain then a separated and symmetric partial metric $p_{\C F}:L_{\C F}\times L_{\C F}\to \BB R^{+\infty}_{\geq 0}$ by letting $p_{\C F}(a,b)=\C F(a\vee b)$. The transitivity axiom is checked as follows:
\begin{align*}
\C F(a \lor b) & \leq\C  F((a\lor c)\lor (c\lor b)) \\
& \leq\C  F(a\lor c)+ \C F(c\lor b)- \C F((a\lor c) \land (c\lor b)) \\
& \leq \C F(a\lor c)+ \C F(c\lor b)- \C F(c\lor c)
\end{align*}

\begin{remark}
The connection between partial metrics and valuations has a converse side \cite{10.1016/j.tcs.2003.11.016}: any (symmetric and separated) partial metric $p:X\times X\to \BB R^{+\infty}_{\geq 0}$ defines an order $\sqsubseteq_{p}$ over $X$ given by $x\sqsubseteq_{p}y $ iff $p(x,y)\leq p(x,x)$. Then, whenever the poset $(X, \sqsubseteq_{p})$ is a join semi-lattice, the self-distance function $ X \stackrel{\Delta}{\to} X\times X \stackrel{p}{\to} \BB R^{+\infty}_{\geq 0}$ is a join-valuation.
\end{remark}

Extending this observation to arbitrary (commutative and integral) quantales leads to the following:

\begin{definition}
A \emph{(generalized) valuation space} (noted $\COV{L}{\C F}{Q}$) is the given of a monotone function from a complete lattice $L$ to a quantale $Q$ satisfying
\begin{equation}
\C F(a\lor b) \leq \C F(a) + (\C F(b) \Res\C  F(a\land b))
\end{equation}
for all $a,b\in L$ such that $a\land b\neq \bot$.
\end{definition}

By arguing as above, any valuation space $\COV{L}{\C F}{Q}$ yields a (symmetric and separated) partial metric $\C F: L_{\C F}\times L_{\C F}\to Q$.
%
%
This leads to the following definition:

\begin{definition}
A \emph{partial metric valuation space} is a triple $(X, \COV{L}{\C F}{Q}, a)$, where $ \COV{L}{\C F}{Q}$ is a valuation space and
$UX=(X, L_{\C F},a)$ is a (symmetric and separated) partial ultra-metric space.

A map of partial metric valuation spaces $(X, \COV{L}{\C F}{Q},a)$ and $(Y,\COV{M}{\C G}{R},b)$ is an arrow $(f,\varphi)$ in 
$\QQU( UX, UY)$.
%
\end{definition}

Observe that any partial metric valuation space $(X, \COV{L}{\C F}{Q}, a)$ yields \emph{both} a partial ultra-metric $a:X\times X\to L_{\C F}$ and a (separated) partial metric $\C F\circ a:X\times X\to Q$.

\begin{example}
The Euclidean metric can be presented as a partial metric valuation space in two ways: either using the Lebesgue measure as shown before, or
by considering the valuation space $\COV{\C I(\BB R)^{-}}{\mathsf{diam}}{\BB R^{+\infty}_{\geq 0}}$ where $\C I(\BB R)^{-}$ is the join-semilattice $\C I(\BB R)-\{\emptyset\}$ and $\mathsf{diam}$ is the diameter function (which is in fact \emph{modular} over intersecting intervals, see \cite{Geoffroy2020}).
\end{example}
%
%
Observe that for any map $(f,\varphi)$ of spaces $(X, \COV{L}{\C F}{Q},a)$ and $(Y, \COV{M}{\C G}{R},b)$, we have that for all $x,y\in X$ and $\alpha\in L$, 
\begin{align*}
\C G(b(f(x),f(y))\leq \C G(\varphi(x,\alpha))
\end{align*}
In other words, the composition of derivatives and valuations provides a compositional way to compute distance bounds.

We let $\mathsf p\VPM$ indicate the category of partial metric valuation spaces.
Since the functor $U: \mathsf p\VPM\to \QQU$ is by definition full and faithful, $\mathsf p\VPM$ inherits the cartesian closed structure from $\QQU$.
In particular, given partial metric valuation spaces $(X, \COV{L}{\C F}{Q},a)$ and $(Y, \COV{M}{\C G}{R},b)$, their product and exponential are as follows: 
\begin{center}
$
(X\times Y, \begin{tikzcd}L\times M\ar{r}{\C F\times\C G} & R\times Q\end{tikzcd}, a\times b)
$\\
$
(Y^{X}, \begin{tikzcd} (R_{\C G})^{X\times L_{\C F}}\ar{r}{ \C G\circ \_  }& Q^{X\times L_{\C F}}\end{tikzcd} , d_{a,b})
$
\end{center}

\begin{example}\label{ex:pmet}
The exponential of the Euclidean metric in $\mathsf p\VPM$ is the partial metric $p: (\BB R^{\BB R}\times \BB R^{\BB R}) \to (\BB R^{+}_{\geq 0})^{\BB R\times \C I(\BB R)}$ given by 
$$
p(f,g)(x, I)=\mathsf{diam}\{ b(f(y), g(z))\mid y,z\in \{x\}\vee I\}
$$
We can compare $p$ with the na\"ive lifting $d$ in Fig.~\ref{fig:counterexamples}, by considering the interval $I=[x-r,x+r]$. One has 
$p(f,h)(x,I)=d(f,h)$ but $p(h,g)(x,I)= d(h,h)+d(h,g)$. Hence transitivity holds for $p$, since 
$p(f,g)(x,I)= p(f,h)(x,I)+ p(h,g)(x,I)-p(h,h)(x,I)$. 
\end{example}

This construction can be adapted to metric spaces. Let a \emph{dual join-valuation} be a monotone map
$L^{\mathsf{op}}\times L \stackrel{\C D}{\to} Q$ (where $L^{\mathsf{op}}$ is the complete lattice with the reversed order) 
satisfying 
\begin{align*}
\C D(a,a) =0\qquad \qquad
\C D(a,b\vee c) \leq \C D(a, b)+ \C D(b\land c, c)
\end{align*}
One defines the quotient $L_{\C D}$ by $a\simeq_{\C D}b$ iff $a\leq b$ or $b\leq a$ and $\C D(a,a\vee b)=\C D(b,b\vee a)=0$. 
For any dual join valuation $\C D$, the function $d_{\C D}:L_{\C D}\times L_{\C D}\to Q$ given by $d(a,b)=\C D(a,a\vee b)+\C D(b,b\vee a)$ is a symmetric and separated metric. Moreover,
any join-valuation $L\stackrel{\C F}{\to} Q$ yields the dual join valuation $\C F'(a,b)=\C F(b)\Res\C  F(a)$.

Let a \emph{metric valuation space} be a triple $(X, L^{\mathsf{op}}\times L \stackrel{\C D}{\to} Q, a)$, where $L^{\mathsf{op}}\times L \stackrel{\C D}{\to} Q$ is a dual join valuation and $UX=(X,L_{\C D}, a)$ is a symmetric and separated ultra-metric space. 
One obtains then a category $\VPM$ of metric valuation spaces, with $\VPM(X,Y)=\QQRU(UX,UY)$.

\begin{theorem}
The categories $\mathsf p\VPM$ and $\VPM$ are cartesian closed.
\end{theorem}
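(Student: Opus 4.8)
The plan is to exploit the full and faithful functors $U\colon\mathsf p\VPM\to\QQU$ and $U\colon\VPM\to\QQRU$ — which are full and faithful by the very definition of the hom-sets $\mathsf p\VPM(X,Y)=\QQU(UX,UY)$ and $\VPM(X,Y)=\QQRU(UX,UY)$ — together with the cartesian closedness of $\QQU$ and $\QQRU$, a consequence of Lemma~\ref{prop:symmetric0}. I rely on the general principle that a full and faithful functor $U\colon\C C\to\C D$ into a cartesian closed category $\C D$ makes $\C C$ cartesian closed, with $U$ preserving the structure, provided the essential image of $U$ is closed under the terminal object, binary products and exponentials of $\C D$: full-faithfulness transports these universal properties back along $U$, since $\C C(D,C)\cong\C D(UD,UC)$ naturally. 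Thus the statement reduces to showing that the product and the exponential in $\QQU$ (resp.~$\QQRU$) of objects of the form $UX,UY$ are again of this form, i.e.~underlie partial metric (resp.~metric) valuation spaces.

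The underlying QLR are supplied by Proposition~\ref{prop:ultra}: the $\QQU$-product and $\QQU$-exponential of the partial ultra-metric spaces $UX=(X,L_{\C F},a)$ and $UY=(Y,M_{\C G},b)$ are again symmetric and separated partial ultra-metric spaces, with locales $L_{\C F}\times M_{\C G}$ and $(M_{\C G})^{X\times L_{\C F}}$; symmetrically for $\QQRU$ and ultra-metric spaces. It then remains to exhibit the valuation data and to verify that (i)~the quotients induced by the product valuation $\C F\times\C G$ on $L\times M$ and by the exponential valuation $\C G\circ\_$ reproduce exactly these locales, and (ii)~both maps satisfy the valuation-space axiom. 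For (i), one checks that reduction commutes with the two constructions: $(L\times M)_{\C F\times\C G}\cong L_{\C F}\times M_{\C G}$, while on the exponential the induced valuation $\C G\circ\_$ separates comparable functions (because $\C G$ separates comparable elements of $M_{\C G}$), so its quotient is trivial and equals $(M_{\C G})^{X\times L_{\C F}}$. For (ii), since joins, meets, the monoidal sum and the residuation $\Res$ are all computed componentwise in a product quantale and pointwise in a function quantale, the inequality $\C H(a\lor b)\le\C H(a)+(\C H(b)\Res\C H(a\land b))$ for $\C F\times\C G$ and for $\C G\circ\_$ reduces coordinate-by-coordinate (resp.~point-by-point) to the corresponding inequalities already known for $\C F$ and $\C G$.

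The main obstacle is the side condition $a\land b\ne\bot$ attached to the valuation-space axiom, which does not pass cleanly through products and function spaces: a componentwise or pointwise meet may be non-bottom globally while collapsing to $\bot$ in some coordinates, where the known inequality for $\C F$ or $\C G$ is then unavailable. On such coordinates one must instead establish the degenerate subadditive form $\C H(a\lor b)\le\C H(a)+\C H(b)$ directly, using the convention $\C H(\bot)=0$ and the identity $\gamma\Res 0=\gamma$ together with monotonicity. This case analysis, rather than the transfer of the cartesian closed structure, is where the genuine care is required.

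Finally, the terminal object of $\mathsf p\VPM$ (resp.~$\VPM$) is the one-point space carrying the trivial valuation, which $U$ sends to the terminal QLR of $\QQU$ (resp.~$\QQRU$); this provides the terminal object and completes the verification that the image of $U$ is closed under the cartesian closed operations. The case of $\VPM$ is entirely parallel to that of $\mathsf p\VPM$: one replaces join-valuations by dual join-valuations $L^{\mathsf{op}}\times L\xrightarrow{\C D}Q$, partial ultra-metric spaces by ultra-metric spaces and $\QQU$ by $\QQRU$, then checks that the defining axioms $\C D(a,a)=0$ and $\C D(a,b\lor c)\le\C D(a,b)+\C D(b\land c,c)$ lift componentwise to products and pointwise to exponentials exactly as above.
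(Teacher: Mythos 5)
Your overall route is exactly the paper's: the paper's entire argument for this theorem consists of the observation that $U$ is full and faithful by the very definition of the hom-sets, so that $\mathsf p\VPM$ and $\VPM$ inherit the cartesian closed structure from $\QQU$ and $\QQRU$ (cartesian closed by Lemma~\ref{prop:symmetric0} and Proposition~\ref{prop:ultra}), with precisely the product and exponential valuation structures you describe. Your transfer principle, the identification $(L\times M)_{\C F\times \C G}\cong L_{\C F}\times M_{\C G}$, the triviality of the quotient on the exponential, and the terminal object are all consistent with (indeed more detailed than) what the paper records.

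The problem lies exactly at the point you yourself flag as ``the main obstacle''. You propose to handle coordinates where the pointwise (or componentwise) meet collapses to $\bot$ by establishing the degenerate subadditive form $\C H(a\lor b)\leq \C H(a)+\C H(b)$ ``directly, using the convention $\C H(\bot)=0$''. That inequality is not derivable from the valuation-space axiom, and it is false in the paper's own motivating example: for the diameter valuation on intervals, $\mathsf{diam}([0,1]\lor[2,3])=\mathsf{diam}([0,3])=3>2=\mathsf{diam}([0,1])+\mathsf{diam}([2,3])$. The side condition $a\land b\neq\bot$ is present in the definition precisely because valuations need not be subadditive across bottom meets --- which is also why, in the Euclidean example, the paper works over $\C I(\BB R)^{-}=\C I(\BB R)-\{\emptyset\}$, where disjoint intervals have no meet at all. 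So this step of your argument would fail as stated. You have genuinely spotted a delicacy that the paper passes over in silence (the same issue already arises for the product valuation $\C F\times\C G$ when one component's meet is $\bot$), but the repair cannot be a direct proof of subadditivity: one must instead read the meet condition locally in the derived lattices --- requiring the inequality only at coordinates where the local meet is non-bottom --- or otherwise exclude bottom meets structurally, rather than prove an unprovable inequality.
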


\begin{example}
The Euclidean metric lives in $\VPM$ as it arises from the
dual join valuation $\C D: \C I(\BB R)^{\mathsf{op}}\times \C I(\BB R)\to \BB R^{\infty}_{\geq 0}$ given 
by $\C D(I,J)=\mathsf{diam}(J)\Res \mathsf{diam}(I)$. Its lifting to $\BB R^{\BB R}$ inside $\VPM$ yields the metric $m(f,g)=2p(f,g)-p(f,f)-p(g,g)$,
where $p$ is the partial metric from Example \ref{ex:pmet}. 
\end{example}
%


\section{A Generalized Lipschitz Condition}\label{section6}


In this section we explore a different class of morphisms between QLR, generalizing the usual Lipschitz condition. Notably, we show that in this setting the QLR satisfying reflexivity and transitivity can be lifted to all simple types.



\subsection{From Lipschitz to Locally Lipschitz functions}

%
%

%
%

As observed in previous sections, the Lipschitz condition has been widely investigated in program semantics, 
but is considered problematic when dealing with fully higher-order languages.
Does the picture change when we step from models like $\Met_{Q}$ to categories of QLR?

\begin{remark}
For simplicity, from now on we will suppose that QLR are always reflexive and symmetric.
\end{remark}

To answer this question we must first find a suitable extension of the Lipschitz condition to this setting. 
The first step is to  introduce a notion of \emph{finiteness}: since a quantale is a complete lattice, we must avoid that \emph{any} function $f:X\to Y$ between QLR admits the trivial Lipschitz constant $\top$. 
\begin{definition}
Let $Q$ be a commutative and integral quantale. A \emph{finiteness filter of $Q$} is a downward set $\ff Q\subseteq Q$ such that $a,b\in \ff Q$ implies $a+b\in \ff Q$.

A \emph{finitary QRL} is a tuple $(X,Q,\ff Q,a)$ such that $(X,Q,a)$ is a QLR, $\ff Q$ is a finitary filter of $Q$ and $Im(a)\subseteq \ff Q$.


\end{definition}

The positive reals $\BB R_{\geq 0}$ form a finiteness filter of $\BB R^{\infty}_{\geq 0}$. Moreover, if $\ff Q$ and $\ff R$ are finiteness filters of $Q$ and $R$, then $\ff Q\times \ff R$ is a finiteness filter of $Q\times R$, and 
for all set $X$, $(\ff Q)^{X}$ is a finiteness filter of $Q^{X}$.

Now, a basic observation is that if a function $f:X\to Y$ between metric spaces is $L$-Lipschitz, then there is a monoid homomorphism $\varphi: \BB R^{+}_{\geq 0}\to \BB R^{+}_{\geq 0}$ given by $\varphi(x)=L\cdot x$, such that $d(f(x),f(y))\leq \varphi(d(x,y))$. 
 This suggests the following:
 \begin{definition}[generalized Lipschitz maps]
Let $(X,Q,\ff Q, a)$, $(Y,R,\ff R,b)$ be finitary QLR. A function $f:X\to Y$ is a \emph{generalized Lipschitz map} from $X$ to $Y$ if there exists a monoid homomorphism $\varphi:Q\to R$ satisfying:
\begin{align*}
\forall \alpha\in \ff Q \ &  \varphi(\alpha)\in \ff R \tag{finiteness} \\
a(x,y)\leq \alpha \ & \To \  b(f(x),f(y))\leq \varphi(\alpha) \tag{Lipschitz}
\end{align*}
\end{definition}

Observe that any Lipschitz function $f: \BB R\to \BB R$ in the usual sense is a generalized Lipschitz map between the finitary and reflexive QLR given by the Euclidean metric.

Finitary QLR and generalized Lipschitz maps form a category $\mathbf L$ with cartesian structure defined as in $\QQ$. Moreover, given finitary QLR $(X,Q,\ff Q,a)$ and $(Y,R,\ff R,b)$ there is a finitary QLR $(\mathbf L(X,Y), R^{X},(\ff R)^{X}, b^{X})$ where $b^{X}(f,g)(x)=b(f(x),g(x))$ (note that also symmetry and reflexivity are preserved).

Yet, with this definition $\mathbf L$ is still \emph{not} cartesian closed. 
For instance, consider the function $f(x)(y): \BB R\to \BB R^{\BB R}$ given by $f(x)(y)=x \cdot y$. As a function of two variables, $f$ is Lipschitz in both $x$ and $y$, with Lipschitz constants $|y|$ and $|x|$; one can use this fact to show that $f\in \ELLE (\BB R, \BB R^{\BB R})$.
%
Now, if $\ELLE$ were cartesian closed, using the canonical isomorphism $\ELLE(\BB R, \BB R^{\BB R})\simeq \ELLE(\BB R\times \BB R, \BB R)$, we could deduce that also the function $\ev(f)(\langle x,y\rangle)=f(x)(y)$ is Lipschitz. 
However, there is no way to deduce, from the two piecewise Lipschitz constants $|x|$ and $|y|$ for $x$ and $y$, a uniform Lipschitz constant for \emph{both} variables. 
In fact, all we can say is that, for any choice of points $x,y\in \BB R$, we can deduce a Lipschitz constant $L_{x,y}=|x|\cdot |y|$ for $\ev(f)$, although there is no way to define one in a uniform way.

This observation suggests to replace the Lipschitz condition with the \emph{local} Lipschitz condition. 
Recall that  a function $f:X\to Y$ between two metric spaces is \emph{locally Lipschitz continuous} when for all $x\in X$ there exists a constant $L_{x}$ such that the inequality $d(f(y),f(z))\leq L_{x}\cdot d(y,z)$ holds in some open neighborhood of $x$.

\begin{remark}
From now on we will suppose that quantales $(Q,\geq)$ are \emph{continuous} as lattices, and we indicate by 
$\alpha  \ll \beta$ the usual \emph{way below} relation. It is clear that the Lawvere quantale and all quantales obtained from it by applying products are continuous lattices.
\end{remark}

\begin{definition}
Let $(X,Q,\ff Q,a)$ and $(Y,R,\ff R,b)$ be finitary QLR. A function $f: X\to Y$ is said \emph{generalized locally Lipschitz} (in short LL), if there exists a function $\varphi:X\times Q\to R$ (called a \emph{family of LL-constants for $f$}) such that $\varphi(x,\_)$ is additive in its second variable, and the following hold for all $x\in X$:
\begin{align*}
&\forall \alpha \in \ff Q \   \varphi(x,\alpha)\in \ff R  & \text{\emph{(finiteness)}} \\
& \exists \delta_{x} \gg 0\forall y,z\in X   \ 
  a(x,y),a(x,z)\leq \delta_{x}\To \\
& \ \  a(y,z)\leq \alpha \  \To \  b(f(y),f(z))\leq \varphi(x,\alpha)  & \text{\emph{(local\ Lipschitz)}}
\end{align*}
\end{definition}

Any locally Lipschitz function $f: \BB R\to \BB R$ yields a LL-map between the finitary QLR given by the Euclidean metric.

The finitary QLR with LL maps form a category $\LLE$: the identity function $\mathrm{id}_{X}$ has the LL  constants $\lambda x\alpha.\alpha$. Moreover, the composition of  LL   functions $f:X\to Y$ and $g: Y\to Z$ is  LL: if $\varphi$ is a family of LL  constants for $f$ and $\psi$ is a family of LL constants for $g$, then the map $(x,\alpha)\mapsto \psi(f(x), \varphi(x,\alpha))$ is a family of  LL constants for $g\circ f$ (observe that identity and composition of LL constants work precisely as in $\QQ$).

One can also consider a slightly different category $\HLLip$ defined as follows. 
First, for a QLR $(X,Q,a)$, let $\simeq_{a}$ be the equivalence relation over $X$ defined by $x\simeq_{a}x'$ if $a(x,x')=0$. We indicate by $X/a$ the quotient of $X$ by $\simeq_{a}$. By definition, the QLR $(X/a,Q,a)$ is separated.

The objects of $\HLLip$ are the same as those of $\LLE$, while the arrows between $(X,Q,\ff Q,a)$ and $(Y,R,\ff R,b)$ are pairs
 $(f,\varphi)$, where $f:X\to Y$ is LL and stable under $\simeq_{a}$-classes (i.e.~$a(x,y)=0$ implies $b(f(x),f(y))=0$), and
 $\varphi$ is a family of LL-constants for $f$ and is also stable under $\simeq_{a}$-classes (i.e.~$a(x,y)=0$ implies $\varphi(x,\alpha)=\varphi(y,\alpha)$).

There is a forgetful functor $U: \HLLip \to \LLE$ given by
$U{(X,Q,\ff Q,a)}= (X/a, Q,\ff Q, Ua)$, where $Ua([x],[y])=a(x,y)$, and $U{(f,\varphi)}=\widetilde f$, where $\widetilde f([x]_{a})=[f(x)]_{b}$. 

Given finitary QLR $(X,Q,\ff Q,a)$ and $(Y,R,\ff R,b)$ we can define the two finitary QLR \longv{\\ }
$( \LLE(X,Y), R^{X}, (\ff R)^{X},b^{X})$ and $(\HLLip(X,Y), R^{X},(\ff R)^{X}, b^{X}\circ \pi_{1})$.
Observe that if $X$ and $Y$ satisfy transitivity, so do $\LLE(X,Y)$ and $\HLLip(X,Y)$, and if $Y$ is a standard metric space, $\LLE(X,Y)$ is a standard metric space, while  $\HLLip(X,Y)$ is a pseudo-metric space.

Moreover, if the QLR $X,Y,Z$ satisfy transitivity, we  can define an isomorphism \\ $\begin{tikzcd}\HLLip(Z\times X, Y)\ar[bend left=5]{r}[above]{\lambda} & \HLLip(Z, \HLLip(X,Y)) \ar[bend left=5]{l}[below]{\ev} \end{tikzcd}$ as follows:
\begin{itemize}
\item the map $\lambda(f, \varphi)=( \langle \lambda(f), \lambda_{0}(\varphi)\rangle, \lambda_{1}(\varphi))$ is defined by 
\begin{align*}
\lambda(f)(z)(x) & =f(\langle z,x\rangle) \\
\lambda_{0}(\varphi)(z)(\langle x,\alpha\rangle) & =\varphi(\langle z,x\rangle, \langle 0,\alpha\rangle)\\ 
\lambda_{1}(\varphi)(\langle z,\zeta\rangle)(x)& = \varphi(\langle z,x\rangle, \langle \zeta, 0\rangle)
\end{align*}

\item the map $\ev(\langle g,\psi\rangle, \chi)=\langle \ev(g), \ev(\psi, \chi)\rangle $ is defined by
\begin{align*}
\ev(f)(\langle z,x\rangle) &= f(z)(x) \\
\ev(\psi, \chi) (\langle z,x\rangle, \langle \zeta, \alpha\rangle) &=
\chi(\langle z,\zeta\rangle)(x)+ \psi(z)(\langle x,\alpha\rangle)
\end{align*}
\end{itemize}
%
Reflexivity and transitivity are essential for the isomorphism above to hold: for all $(f,\varphi)\in \HLLip(Z\times X, Y)$, to show that 
\begin{align*}
b(f( z,x), f( z,x))\leq \lambda_{1}(\varphi)(\langle z, 0\rangle)(x)=0
\end{align*}
 one makes essential use of the fact that 
$b(f( z,x), f( z,x))=0$ holds in $Y$. Conversely, given
$(\langle g,\psi\rangle, \chi)\in \HLLip(Z,\HLLip(X,Y))$, 
to show that
\begin{align*} 
b(f(z,x), f( z',x'))&\leq \ev(\psi,\chi)(\langle x,z\rangle, \langle c(z,z'),a(x,x')\rangle) \\
&=  \chi(\langle z, c(z,z')\rangle)(x)+\psi (z)(\langle x,a(x,x')\rangle)
\end{align*}
one makes essential use of the transitivity of $Y$ to deduce the above from $b(f(z,x), f(z',x))\leq \chi(z, c(z,z'))(x)$ and 
$b(f(z',x), f(z',x'))\leq \psi(z)(x,a(x,x'))$.

All this leads to the following result:
%

\begin{proposition}\label{prop:uu}
The full sub-category $\LLE_{\mathsf{Met}}\hookrightarrow\LLE$ of standard metric spaces is cartesian closed. The full sub-category $\LLE_{\mathsf{pMet}}\hookrightarrow \HLLip$ of pseudo-metric spaces is cartesian closed.
Moreover, the restriction of $U$ as a functor from $\LLE_{\mathsf{Met}}$ to $\LLE_{\mathsf{pMet}}$ is a cartesian closed functor.
\end{proposition}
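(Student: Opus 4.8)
The plan is to build the cartesian closed structure out of the product and exponential constructions already laid out before the statement, verifying at each stage that the (pseudo-)metric axioms and the local Lipschitz witnesses are preserved. I begin with finite products. For objects $X=(X,Q,\ff Q,a)$ and $Y=(Y,R,\ff R,b)$ the product QLR $(X\times Y, Q\times R, \ff Q\times\ff R, a\times b)$ carries the componentwise order, so reflexivity, symmetry and transitivity hold pointwise, and separation holds whenever it holds for both factors; the terminal object is the one-point QLR, trivially a standard metric space. Thus finite products restrict to $\LLE_{\mathsf{Met}}$ and $\LLE_{\mathsf{pMet}}$ and coincide with those of $\LLE$ and $\HLLip$.

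Next I check that exponential objects remain inside the subcategories. By the pointwise computation recorded just before the statement, if $Y$ satisfies transitivity so do $\LLE(X,Y)$ and $\HLLip(X,Y)$; combined with the standing assumption that all QLR are reflexive and symmetric, this makes both pseudo-metric whenever $Y$ is. The distance on $\LLE(X,Y)$ is $b^{X}$, which is separated as soon as $Y$ is, since $b^{X}(f,g)=0$ forces $f(x)=g(x)$ for every $x$; hence $\LLE(X,Y)$ is a standard metric space for standard metric $Y$. By contrast the distance $b^{X}\circ\pi_{1}$ on $\HLLip(X,Y)$ ignores the LL-constant component of a morphism, so two pairs $(f,\varphi),(f,\psi)$ are at distance $0$: $\HLLip(X,Y)$ is genuinely pseudo-metric and not separated, which is exactly why $\LLE_{\mathsf{pMet}}$ is the right ambient for it.

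The core is the exponential adjunction. For $\HLLip$, hence for $\LLE_{\mathsf{pMet}}$, the mutually inverse transposes $\lambda$ and $\ev$ are the explicit maps written out before the statement. I would verify, in order: (i) that $\lambda(f,\varphi)$ and $\ev(\langle g,\psi\rangle,\chi)$ are legitimate $\HLLip$-morphisms, i.e.\ their function parts are LL and their constant parts $\lambda_{0}(\varphi),\lambda_{1}(\varphi),\ev(\psi,\chi)$ are additive in the error argument, send $\ff Q$ into $\ff R$, and are stable on zero-distance pairs; (ii) that $\lambda$ and $\ev$ are mutually inverse, which collapses to the set-theoretic currying bijection once the constant bookkeeping is matched; and (iii) naturality in the parameter object. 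The only non-formal points are that the transposed constant-families genuinely witness local Lipschitzness; these are precisely the two inequalities displayed before the statement, which use reflexivity of $Y$ to annihilate the self-distance term (so that $\lambda_{1}(\varphi)(\langle z,0\rangle)(x)=0$) and transitivity of $Y$ to split $b(f(z,x),f(z',x'))$ through the intermediate point $f(z',x)$. For $\LLE_{\mathsf{Met}}\hookrightarrow\LLE$, whose morphisms carry no explicit constants, I would reuse the same formulas read as existential witnesses: the ordinary currying bijection restricts to LL functions in both directions because $\lambda_{0},\lambda_{1}$ and $\ev$ produce a family of LL-constants for each transpose. The main obstacle I anticipate is the \emph{locality} of the condition: each witness $\varphi(x,-)$ is valid only on some neighbourhood $\delta_{x}\gg 0$, and transposition forces me to amalgamate a neighbourhood in the parameter object with one in $X$ into a single neighbourhood of a point of the product (and conversely), while simultaneously preserving additivity and the finiteness filters $\ff R$ and $(\ff R)^{X}$. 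Exhibiting such a common $\delta$ is exactly the step that fails for the strict Lipschitz category $\ELLE$ and is recovered here only thanks to localisation.

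Finally, for the functoriality claim I would show that the restriction of $U$ carries pseudo-metric objects to standard metric ones (the quotient by the zero-distance relation separates) and preserves the cartesian closed structure. It sends products to products, since $(a\times b)((x,y),(x',y'))=0$ iff $a(x,x')=0$ and $b(y,y')=0$, so quotienting commutes with finite products; and it sends the exponential $\HLLip(X,Y)$, after quotienting by $b^{X}\circ\pi_{1}$, to $\LLE(UX,UY)$, the comparison being well defined because $\HLLip$-morphisms are stable on zero-distance pairs. It then remains to check that these comparison isomorphisms commute with the currying and evaluation maps, which follows because $\lambda$ and $\ev$ are given by the same formulas on both sides, so $U$ is a cartesian closed functor.
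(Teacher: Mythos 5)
Your proposal follows the paper's route almost step for step: componentwise products, the explicit transposes $\lambda(f,\varphi)=(\langle\lambda(f),\lambda_{0}(\varphi)\rangle,\lambda_{1}(\varphi))$ and $\ev(\langle g,\psi\rangle,\chi)=\langle\ev(g),\ev(\psi,\chi)\rangle$, mutual inverseness via additivity of the constant families in the error argument, and, for the functor $U$, the two quotient isomorphisms $(X\times Y)/(a\times b)\simeq(X/a)\times(Y/b)$ and $\LLE_{\mathsf{pMet}}(X,Y)/b^{X}\simeq(Y/b)^{(X/a)}$, well defined because morphisms and their constant families are stable on zero-distance classes. One reassurance: the ``main obstacle'' you anticipate---amalgamating a neighbourhood in the parameter object with one in $X$ into a single neighbourhood in the product---is in fact immediate, since the way-below relation in a product quantale is componentwise: $\langle\zeta_{z},\alpha_{x}\rangle\gg 0$ holds iff $\zeta_{z}\gg 0$ and $\alpha_{x}\gg 0$, and the paper's verification simply pairs and splits the two radii accordingly.

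There is, however, one genuine gap: your treatment of $\LLE_{\mathsf{Met}}$ by ``reusing the same formulas read as existential witnesses'' does not go through as stated in the uncurrying direction. A morphism $f\in\LLE_{\mathsf{Met}}(Z,\LLE_{\mathsf{Met}}(X,Y))$ carries no constant families: all you know is that for each $z\in Z$ the set of families of LL-constants for $f(z)$ is non-empty. To form $\ev(\psi,\chi)$ you must simultaneously select one family $\psi(z)$ for \emph{every} $z$, and nothing in the data provides this selection canonically; the paper invokes the axiom of choice at exactly this step, and indeed remarks that $\HLLip$ is ``more constructive'' than $\LLE$ precisely because the witnesses are made part of the morphisms there, so no choice is needed. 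With AC granted your argument matches the paper's; without flagging it, the claim that ``$\ev$ produces a family of LL-constants for each transpose'' is unjustified, since the formula $\ev(\psi,\chi)$ presupposes a $\psi$ that the morphism does not supply. Everything else---preservation of reflexivity, transitivity and separation by the exponential objects, the role of reflexivity in getting $\lambda_{1}(\varphi)(\langle z,0\rangle)(x)=0$, the use of transitivity of $Y$ to split $b(f(z,x),f(z',x'))$ through the intermediate point, and the compatibility of the comparison isomorphisms with $\lambda$ and $\ev$ under $U$---agrees with the paper's proof.
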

\longv{\begin{proof}
We first check the cartesian closure of $\LLE_{\mathsf{pMet}}$. 
\begin{itemize}

\item[($\To$)] the map $\lambda(f, \varphi)=( \langle \lambda(f), \lambda_{0}(\varphi)\rangle, \lambda_{1}(\varphi))$ is defined by 
\begin{align*}
\lambda(f)(z)(x) & =f(\langle z,x\rangle) \\
\lambda_{0}(\varphi)(z)(\langle x,\alpha\rangle) & =\varphi(\langle z,x\rangle, \langle 0,\alpha\rangle)\\ 
\lambda_{1}(\varphi)(\langle z,\zeta\rangle)(x)& = \varphi(\langle z,x\rangle, \langle \zeta, 0\rangle)
\end{align*}
For all $z\in Z$, then map $\lambda_{0}(\varphi)(z)(\_,\_)$ is additive in its second variable; moreover, 
for all $z\in Z$ and $x\in X$ there is $\langle \zeta_{z},\alpha_{x}\rangle \gg 0$ (which implies $\zeta_{z}\gg 0$ and $\alpha_{x}\gg 0$) such that, whenever $c(z,z'),c(z,z'')\leq \zeta_{z}$ and $a(x,x'),a(x,x'')\leq \alpha_{x}$, $\lambda_{0}(\varphi)(z)(\langle x,a(x',x'')\rangle)\geq b(\lambda (f)(z)(x'), \lambda (f)(z)(x''))= b(f(\langle z,x'\rangle), f(\langle z,x''\rangle))$. This proves that $\langle\lambda (f), \lambda_{0}(\varphi)\rangle(z)\in \LLE_{\mathsf{Met}}(X,Y)$.

Finally, any $z$ is contained in an open ball such that, whenever $z',z''$ belong to it,  \\ 
$\lambda_{1}(\varphi)(\langle z, c(z',z'')\rangle)(x)\geq b(\lambda (f)(z')(x), \lambda (f)(z'')(x))= b(f(\langle z',x\rangle), f(\langle z'',x\rangle))$, so we can conclude that 
$\lambda (f, \varphi)\in \LLE_{\mathsf{pMet}}(Z, \LLE_{\mathsf{pMet}}(X,Y))$.

\item[($\Leftarrow$)] the map $\ev(\langle g,\psi\rangle, \chi)=\langle \ev(g), \ev(\psi, \chi)\rangle $ is defined by
\begin{align*}
\ev(f)(\langle z,x\rangle) &= f(z)(x) \\
\ev(\psi, \chi) (\langle z,x\rangle, \langle \zeta, \alpha\rangle) &=
\chi(\langle z,\zeta\rangle)(x)+ \psi(z)(\langle x,\alpha\rangle)
\end{align*}
The map $\ev(\psi, \chi)$ is additive in its second variable. In fact we have 
\begin{align*}
\ev(\psi,\chi)(\langle z,x\rangle, \langle 0, 0\rangle) & =
\chi(\langle z,0\rangle)(x)+ \psi(z)(\langle x,0\rangle)\\ & =0+0=0
\end{align*} and 
\begin{align*}
& \ev(\psi,\chi)(\langle z,x\rangle, \langle \zeta+\zeta', \alpha+\alpha'\rangle)  \\
& =\chi(\langle z,\zeta+\zeta'\rangle)(x)+ \psi(z)(\langle x,\alpha+\alpha'\rangle) \\
& =\chi(\langle z,\zeta\rangle)(x)+\chi(\langle z,\zeta')(x) + \psi(z)(\langle x,\alpha\rangle)+\psi(z)(\langle x,\alpha'\rangle) \\
& =
\chi(\langle z,\zeta\rangle)(x)+ \psi(z)(\langle x,\alpha\rangle)+\chi(\langle z,\zeta')(x) +\psi(z)(\langle x,\alpha'\rangle)
\\ & =
\ev(\psi,\chi)(\langle z,x\rangle, \langle \zeta,\alpha\rangle)+
\ev(\psi,\chi)(\langle z,x\rangle, \langle \zeta',\alpha'\rangle)
\end{align*}

Moreover, for all $z\in Z$ and $x\in X$ there exists $\zeta_{z}\gg 0, \alpha_{x}\gg0$ (which implies $\langle \zeta_{z},\alpha_{x}\rangle \gg 0$) such that whenever $c(z,z'),c(z,z'')\leq \zeta_{z}$ and $a(x,x'),a(x,x'')\leq \alpha_{x}$
\begin{align*}
&\ev(\psi,\chi)(\langle z,x\rangle, \langle \zeta, \alpha \rangle)\\
& \geq
b(f(z')(x'), f(z')(x''))+ b(f(z')(x''),f(z'')(x'')) \\
& \geq 
b(f(z')(x'), f(z'')(x'')\\
& = b(\ev(f)(\langle z',x'\rangle), \ev (f)(\langle z'',x''\rangle))
\end{align*}

%
We can thus conclude that $\ev(\langle g,\psi\rangle, \chi)\in \LLE_{\mathsf{pMet}}(Z\times X, Y)$.

 \end{itemize}
It remains to show that $\lambda$ and $\ev$ inverse each-other:
\begin{itemize}
\item on one side we have 
$$\ev( \langle \lambda (f), \lambda_{0}(\varphi)\rangle, \lambda_{1}(\varphi)\rangle) =
\langle \ev(\lambda (f)), \ev(\lambda_{0}(\varphi), \lambda_{1}(\varphi))\rangle =\langle f, \varphi\rangle
$$
since
$\ev(\lambda_{0}(\varphi), \lambda_{1}(\varphi))(\langle z,x\rangle, \langle \zeta, \alpha\rangle)=
\varphi(\langle z,x\rangle, \langle \zeta, 0\rangle)+
\varphi(\langle z,x\rangle, \langle 0, \alpha\rangle)
=
\varphi(\langle z,x\rangle, \langle \zeta, \alpha\rangle)$ by the additivity of $\varphi$.

\item on the other side we have
$$\lambda (\ev(\langle g,\psi\rangle, \chi))=
\lambda (  \ev(g), \ev(\psi, \chi))= ( \langle \lambda(\ev(g)), \lambda_{0}(\ev(\psi, \chi))\rangle, \lambda_{1}(\ev(\psi, \chi))\rangle= (\langle g,\psi\rangle, \chi)$$
since 
$\lambda_{0}(\ev(\psi, \chi))(z)(\langle x,\alpha\rangle)=
\psi(z)(\langle x, \alpha\rangle) + \chi(\langle z,0\rangle)(x)=\psi(z)(\langle x, \alpha\rangle)$ and \\
$\lambda_{1}(\ev(\psi, \chi))(\langle z, \zeta\rangle)(x)=
\psi(z)(\langle x, 0\rangle) + \chi(\langle z,\zeta\rangle)(x)=\chi(\langle z,\zeta\rangle)(x)$.
\end{itemize}

The cartesian closure of $\LLE_{\mathsf{Met}}$ is proved as follows: if $f\in \LLE_{\mathsf{Met}}(Z\times X,Y)$, then $f$ admits a family of LL-constants $\varphi$. Then for all $z\in Z$, $\lambda_{0}(\varphi)(z)$ is a family of LL-constants
for $\lambda(f)(z)$, which implies that $\lambda (f)(z)\in \LLE_{\mathsf{Met}}(X,Y)$; moreover, $\lambda_{1}(\varphi)$ is a family of LL-constants for the application $z\mapsto \lambda(f)(z)$, so we can conclude that $\lambda(f)\in \LLE_{\mathsf{Met}}(Z,Y^{X})$. 

If now $f\in \LLE_{\mathsf{Met}}(Z, Y^{X})$, then for all $z\in Z$, the set of families of LL-constants for $f(z)$ is non-empty;  by the axiom of choice, there exists then a function $\psi$ yielding, for all $z\in Z$, a family of LL-constants for $f(z)$. Moreover $f$ itself admits a family of LL-constants $\chi$.
Then the map $\ev(\psi,\chi)$ is a family of LL-constants for $\ev(f)$, so we deduce $\ev(f)\in \LLE_{\mathsf{Met}}(X,Y)$. 

It remains to prove that $U$ is a cartesian closed functor. This descends from the following facts:
\begin{itemize}
\item ${X\times Y}/a\times b\simeq( X/a)\times( Y/b)$: in fact $\langle x,y\rangle \simeq_{a\times b}\langle x',y'\rangle $ iff $x\simeq_{a}x'$ and $y\simeq_{b}y'$.

\item ${\LLE_{\mathsf{pMet}}(X,Y)}/{b^{X}} \simeq ({Y}/{b})^{({X}/{a})}$: 
first, observe that $(f,\varphi)\simeq_{b^{X}}(g,\psi)$ iff for all $x\in X$, $f(x)\simeq_{b} g(x)$ iff
for all $x,y\in X$, $a(x,y)=0$ implies $f(x)\simeq_{b}g(y)$ (since $f,g$ are stable under $\simeq_{a}$-classes).
Now, for all $\simeq_{a}$-stable functions $f,g$, let $f\sim g$ iff 
for all $x,y\in X$, $a(x,y)=0$ implies $f(x)\simeq_{b}g(y)$. Then the claim follows from the observation that the equivalence classes of $\sim$ are in bijection with the functions from $\simeq_{a}$-classes to $\simeq_{b}$-classes.
\end{itemize}
Finally, since for all pseudo-metric space $(X,Q,a)$ we have that $Ua([x],[y])=a(x,y)$, 
from $b(f(y),f(z))\leq \varphi(x, a(y,z))$ we deduce 
$Ub(Uf([y]), Uf([z]))\leq \tilde\varphi([x], Ua([y],[z]))$. We conclude then that $\tilde\varphi$ is a family of LL-constants for $Uf$.
\end{proof}
}

The category $\HLLip$ is in some sense more constructive than $\LLE$  since to show that cartesian closure of the latter one needs the
axiom of choice (see Appendix). 

%
\begin{example}
In $\LLE$ the space of locally Lipschitz functions $\LLE(\BB R, \BB R)$ is endowed with the \emph{pointwise} metric 
$d_{\mathsf{Point}}(f,g): \BB R\to \BB R_{\geq 0}$, where 
$d_{\mathsf{Point}}(f,g)(x)=d_{\mathsf{Euc}}(f(x),g(x))$.

\end{example}

\subsection{Locally Lipschitz Models}

For any cartesian closed category, $\BB C$, we let a \emph{LL-model of $\BB C$} be a cartesian closed functor 
$F: \BB C\to  \LLE_{\mathsf{pMet}}$, 
%
Observe that  a LL-model $F: \BB C\to  \LLE_{\mathsf{pMet}}$ induces a cartesian closed functor
$U\circ F:\BB C\to  \LLE_{\mathsf{Met}}$.

Concretely, a LL-model consists in the following data:
\begin{itemize}
\item for any object $X$ of $\BB C$, a finitary pseudo-metric space $(\model X, \nudel X,  \fini X , a_{X})$;

\item for any morphism $f\in \BB C(X,Y)$, a LL-map $\model f: \model X\to \model Y$ stable on the $a_{X}$-classes, and a family of LL-constants $\nudel f: \model X\times \nudel X\to \nudel Y$ for $\model f$,

\end{itemize}
where the application $f\mapsto \nudel f$, which plays the role of the derivative in this setting, satisfies a bunch of properties that we discuss in some more detail below.


We now define a concrete model of the simply typed $\lambda$-calculus over a set of locally Lipschitz functions. 
For all $n>0$, let us fix a set $\C L_{n}$ of {locally Lipschitz} functions $f: \BB R^{n}\to \BB R$ (in the usual sense), and for each $f\in \C L_{n}$, let us fix a function $\Lip(f): \BB R^{n}\to [0,+\infty)$ associating each $\vec x\in \BB R^{n}$ with a local Lipschitz constant $\Lip(f)(\vec x)$ so that when $\vec y, \vec z$ are in some open neighborhood of $\vec x$, 
$$
 |f(\vec y)- f(\vec z)| \leq \Lip(f)(\vec x)\cdot  d^{n}_{\mathsf{Euc}}(\vec y,\vec z)
$$
where $d^{n}_{\mathsf{Euc}}(\vec y, \vec z)=\sqrt{\sum_{i}{(y_{i}-z_{i})^{2}}}$. 

For any simple type $\sigma$, a finitary pseudo-metric space $(\model\sigma, \nudel \sigma, \fini{\sigma}, a_{\sigma})$ is defined by first letting
$ \model\Real = \BB R$, $\fini{\Real} =\BB R^{\infty}_{\geq 0}$, 
 $ \nudel\Real =[0,\infty]_{+} $, $ a_{\Real} =d_{\mathsf{Euc}}$ and then lifting the definition to all other types exploiting the cartesian closed structure of $\HLLip$.
 For any simple type $\sigma$, $U(\model \sigma, \nudel \sigma, \fini \sigma, a_{\sigma})$ is then a standard metric space (observe in particular that one has $Ua_{\sigma\to \tau}(f,g)(x)=Ua_{\sigma}(f(x),g(x))$).
Moreover, given a context $\Gamma=\{x_{1}:\sigma_{1},\dots, x_{n}:\sigma_{n}\}$ and a term $t$ of type $\Gamma \vdash t:\sigma$ (that we take as representative of a class of terms of type $(\prod_{i=1}^{n}\sigma_{i})\to \sigma$), the functions $\model t: \prod_{i=1}^{n}\model{\sigma_{i}} \to \model \sigma$ and $\nudel t: \prod_{i=1}^{n}\model{\sigma_{i}}\times \prod_{i=1}^{n}\nudel{\sigma_{i}}\to \nudel \sigma$ are defined by a straightforward induction on $t$. We illustrate below only the definition of $\nudel t$:
\begin{align*}
 \nudel{\TT r} (\vec x, \vec \alpha)& = 0  \\ 
  \nudel{\TT f}(\vec x, \vec \alpha) & = \Lip(f)(\vec x)\cdot(\sum \vec \alpha) \\
\nudel{x_{i}}(\vec x, \vec \alpha) & = \alpha_{i}\\
\nudel{\langle t,u\rangle}(\vec x, \vec \alpha)& =\langle \nudel t(\vec x, \vec \alpha),\nudel u(\vec x, \vec \alpha)\rangle\\
\nudel{t\pi_{i}}(\vec x, \vec \alpha) & = \pi_{i}(\nudel t(\vec x,\vec \alpha)) \\
\nudel{\lambda y.t}(\vec x, \vec \alpha) & =  \lambda y.{  \nudel t}(\vec x*y, \vec \alpha*0 ) \\
\nudel{tu}(\vec x, \vec \alpha) & =
{\nudel t}(\vec x, \vec \alpha)(\model u(\vec x))  \\
& \quad +  \model{t}_{1}(\vec x, \vec \alpha)(\model u(\vec x), \nudel u(\vec x, \vec \alpha))
\end{align*}
where recall that for $t$ of type $\tau\to \sigma$, $\model t$ is a pair $ \langle \model t_{0}, \model t_{1}\rangle$ with 
$\model t_{0}(\vec x,\vec \alpha)\in \model \sigma^{\model \tau}$ and 
$\model t_{1}(\vec x, \vec \alpha) \in \nudel \sigma^{\model \tau \times \nudel \tau}$.

\begin{theorem}[Soundness]\label{thm:llstlc}
For all simply typed term $t$ such that $\Gamma \vdash t:\tau$, $(\model t, \nudel t)\in \LLE_{\mathsf{pMet}}(\model \Gamma, \model \sigma)$. Moreover, if $t\longrightarrow_{\beta} u$, then $\model t=\model u$ and $\nudel t=\nudel u$.
%
%
%
%

\end{theorem}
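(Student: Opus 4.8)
The plan is to argue by induction on the typing derivation of $\Gamma\vdash t:\sigma$, establishing simultaneously that $\model t$ is a locally Lipschitz, $\simeq_{a}$-stable map $\model\Gamma\to\model\sigma$ and that $\nudel t$ is a $\simeq_{a}$-stable family of LL-constants for $\model t$ which is additive in its error argument; by the definition of $\LLE_{\mathsf{pMet}}$ this is exactly what it means for $(\model t,\nudel t)$ to be an arrow. The guiding observation is that the clauses defining $\model t$ and $\nudel t$ are precisely the images, under the cartesian-closed structure of $\HLLip$ described in Proposition~\ref{prop:uu}, of the canonical interpretation of $\STLC$ in a cartesian closed category. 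Hence for every term-former except the base constants, well-definedness follows formally from the fact that $\LLE_{\mathsf{pMet}}$ is cartesian closed, and only the base cases carry genuine content.

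For the base cases I would check the three leaves directly. A variable $x_{i}$ is interpreted by the $i$-th projection, whose LL-constants $\lambda\vec x\vec\alpha.\alpha_{i}$ are additive and $\simeq_{a}$-stable, so $(\model{x_i},\nudel{x_i})$ is an arrow. A real constant $\TT r$ is the constant map with zero derivative, trivially an arrow. The crucial leaf is a function symbol $\TT f$ with $f\in\C L_{n}$: here $\nudel{\TT f}(\vec x,\vec\alpha)=\Lip(f)(\vec x)\cdot(\sum\vec\alpha)$, and the required local Lipschitz inequality is exactly the hypothesis fixed on $\Lip(f)$, namely that $|f(\vec y)-f(\vec z)|\le\Lip(f)(\vec x)\cdot d^{n}_{\mathsf{Euc}}(\vec y,\vec z)$ holds for $\vec y,\vec z$ in a neighbourhood of $\vec x$; additivity in $\vec\alpha$ is immediate from the formula, and $\simeq_{a}$-stability holds since $a_{\Real}=d_{\mathsf{Euc}}$ is separated on $\BB R$.

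For the inductive cases I would match each explicit clause against the corresponding categorical composite: pairing with $\langle-,-\rangle$ into the product, $t\pi_i$ with post-composition by $\pi_i$, abstraction $\lambda y.t$ with the currying map $\lambda$ of Proposition~\ref{prop:uu}, and application $tu$ with $\ev\circ\langle\model t,\model u\rangle$. Since identities and composition of LL-constants in this model coincide with those of $\QQ$, and since $\LLE_{\mathsf{pMet}}$ is closed under these operations, each composite is again an arrow once the inductive hypotheses supply arrows for the immediate subterms. The application clause is the most delicate point, and I expect it to be the main obstacle: one must recognise that for function-typed $t$ the interpretation splits as $\model t=\langle\model t_{0},\model t_{1}\rangle$, and that the clause for $\nudel{tu}$ is exactly the $\ev(\psi,\chi)$ formula of Proposition~\ref{prop:uu}, a sum of two contributions. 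Checking that this sum is still additive in $\vec\alpha$ and still yields a local Lipschitz bound reuses the very computation in the proof of Proposition~\ref{prop:uu}, where transitivity of the target pseudo-metric is used to bound $b(f(z')(x'),f(z'')(x''))$ by the sum of a variation-in-argument term $b(f(z')(x'),f(z')(x''))$ and a variation-in-point term $b(f(z')(x''),f(z'')(x''))$; an analogous pair-splitting makes abstraction go through via $\lambda_{0},\lambda_{1}$.

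Finally, for invariance under reduction I would split into the two rules. For a genuine $\beta$-step, the redex is interpreted by $\ev\circ(\lambda(h)\times\mathrm{id})$, which equals the interpretation $h$ of the contractum by the defining equation of the cartesian-closed isomorphism $\lambda\dashv\ev$ of Proposition~\ref{prop:uu}; since this equation holds simultaneously for the function part and the LL-constant part, both $\model t=\model u$ and $\nudel t=\nudel u$ follow (as in the analogous Theorem~\ref{thm:stlc}). For the arithmetic rule $\TT f\TT r_{1}\cdots\TT r_{n}\longrightarrow_{\beta}\TT s$ with $s=f(\vec r)$ one checks directly that $\model{\TT f\TT r_{1}\cdots\TT r_{n}}=f(\vec r)=s=\model{\TT s}$, while on errors both sides vanish, since each $\nudel{\TT r_{i}}=0$ forces $\nudel{\TT f}(\vec r,\vec 0)=\Lip(f)(\vec r)\cdot 0=0=\nudel{\TT s}$. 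This completes the induction.
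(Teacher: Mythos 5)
Your proof is correct and follows essentially the same route the paper intends: the interpretation is defined precisely so that each clause is the image of the canonical $\STLC$ interpretation under the cartesian closed structure of Proposition~\ref{prop:uu}, so the induction reduces to the base constants (where your check of $\Lip(f)$ against the local Lipschitz condition is the only substantive step) plus the $\lambda$/$\ev$ equations --- including the use of additivity of $\nudel t$ in the error argument for $\beta$-invariance --- exactly as you argue. The only nitpick is that you should also note the finiteness condition ($\nudel t(\vec x,\vec\alpha)\in\fini{\sigma}$ for $\vec\alpha$ finite) as part of the induction invariant, but this is immediate at the leaves (e.g.\ $\Lip(f)(\vec x)<\infty$) and preserved by the categorical constructions.
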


Observe that since the QLR $(\model \sigma, \nudel \sigma, a_{\sigma})$ are metric spaces, the Fundamental Lemma reduces in this case to the remark that $a_{\sigma}(\model t, \model t)=0$ holds for all term $t$ of type $\sigma$.
Instead, one can prove a ``local'' version of the contextuality lemma:
%
%
%
%
%

\begin{corollary}[local contextuality of distances]
For all terms $\vdash t,u:\sigma$ there exists $\delta_{t}\in \nudel{\sigma}$, with $\delta_{t}\gg 0$, such that for all contexts $\TT C[\ ]: \sigma \vdash \tau$
$$
a_{\tau}(\model{\TT C[t]}, \model{\TT C[u]}) \leq \nudel{\TT C}( \model t, a_{\sigma}(\model t, \model u))
$$
holds whenever $a_{\sigma}(\model t,\model u)\leq \delta_{t}$.
\end{corollary}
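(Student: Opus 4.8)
The plan is to read the context $\TT C[\ ]:\sigma\vdash\tau$ as the morphism $(\model{\TT C},\nudel{\TT C})\in\LLE_{\mathsf{pMet}}(\model\sigma,\model\tau)$ produced, via Theorem~\ref{thm:llstlc}, from the term $x:\sigma\vdash\TT C[x]:\tau$, so that $\model{\TT C[t]}=\model{\TT C}(\model t)$ and $\model{\TT C[u]}=\model{\TT C}(\model u)$. Unfolding the \emph{local Lipschitz} clause of $\model{\TT C}$ at the point $\model t$ yields a radius $\delta^{\TT C}_{\model t}\gg 0$ such that, whenever $a_\sigma(\model t,y),a_\sigma(\model t,z)\leq\delta^{\TT C}_{\model t}$ and $a_\sigma(y,z)\leq\alpha$, one has $a_\tau(\model{\TT C}(y),\model{\TT C}(z))\leq\nudel{\TT C}(\model t,\alpha)$. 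Instantiating $y:=\model t$, $z:=\model u$, $\alpha:=a_\sigma(\model t,\model u)$ and using reflexivity $a_\sigma(\model t,\model t)=0$, the target inequality holds as soon as $a_\sigma(\model t,\model u)\leq\delta^{\TT C}_{\model t}$. Hence the entire content of the statement reduces to producing a \emph{single} radius $\delta_t\gg 0$, depending only on $t$, with $\delta_t\leq\delta^{\TT C}_{\model t}$ for \emph{every} context $\TT C$.

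To reach this uniformity I would factor all contexts through one morphism. Setting $c:=\lambda h.\TT C[h]$ we have $\vdash c:\sigma\to\tau$ and $\TT C[t]=c\,t$ up to $\beta$, so by $\beta$-invariance of the interpretation (Theorem~\ref{thm:llstlc}) $\model{\TT C[t]}=\ev(\model c,\model t)$, where $\ev\in\LLE_{\mathsf{pMet}}(\model{\sigma\to\tau}\times\model\sigma,\model\tau)$ is the \emph{single} evaluation morphism (the uncurrying of $\mathrm{id}_{\model{\sigma\to\tau}}$), whose family of LL-constants is given by the closed-form $\ev(\psi,\chi)$ of the cartesian closed structure of $\HLLip$. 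The pair $(\model t,\model u)$ is reached from $(\model c,\model t)$ by moving \emph{only} the $\model\sigma$-coordinate, since $\model c$ is frozen; in the product metric the displacement is $\langle 0,a_\sigma(\model t,\model u)\rangle$. Thus the bound we need is exactly the local Lipschitz clause of $\ev$ at $(\model c,\model t)$ restricted to the $\model\sigma$-direction, and it holds provided $a_\sigma(\model t,\model u)$ stays below the $\model\sigma$-component of $\ev$'s radius at $(\model c,\model t)$.

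The hard part, and the real crux of the uniformity, will be to show that this $\model\sigma$-component of the radius of $\ev$ is bounded below by some $\delta_t\gg 0$ that does \emph{not} depend on the coordinate $\model c$. I would read this off the formula $\ev(\psi,\chi)(\langle z,x\rangle,\langle\zeta,\alpha\rangle)=\chi(\langle z,\zeta\rangle)(x)+\psi(z)(\langle x,\alpha\rangle)$: freezing the $\model{\sigma\to\tau}$-coordinate ($\zeta=0$) kills the first summand and leaves only $\psi(z)(\langle x,\alpha\rangle)$, whose validity neighbourhood is governed by the local behaviour of the coded function at $x=\model t$ together with the finiteness filter $\fini\sigma$. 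Using continuity of the lattice $(\nudel\sigma,\geq)$ and the way-below relation, I would extract a uniform witness $\delta_t\gg 0$ from the neighbourhood of $\model t$ in $\model\sigma$ and argue that it sits below every $\delta^{\TT C}_{\model t}$. I expect this to be the only delicate point, precisely because the radius of a composite context shrinks multiplicatively with the Lipschitz constants met along $\TT C$; taming this uniformly in $\TT C$ is exactly where the finiteness-filter hypothesis and the continuous-lattice assumption must do the work, and if those ingredients are insufficient one is forced to read $\delta_t$ as also depending on the fixed datum $a_\sigma(\model t,\model u)$.

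Finally the boundary case $a_\sigma(\model t,\model u)=0$ is immediate and serves as a consistency check: since $\model{\TT C}$ is stable on $\simeq_{a_\sigma}$-classes, $a_\sigma(\model t,\model u)=0$ forces $a_\tau(\model{\TT C}(\model t),\model{\TT C}(\model u))=0$, while additivity of $\nudel{\TT C}(\model t,-)$ gives $\nudel{\TT C}(\model t,0)=0$, so the bound holds for every context irrespective of $\delta_t$. Combining this with the context-independent radius obtained above then yields the inequality simultaneously for all $u$ with $a_\sigma(\model t,\model u)\leq\delta_t$ and all contexts $\TT C[\ ]:\sigma\vdash\tau$, which is the asserted claim.
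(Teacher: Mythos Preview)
The paper gives no proof here; the corollary is presented as an immediate consequence of soundness (Theorem~\ref{thm:llstlc}). The intended one-line argument is exactly your first paragraph: soundness yields $(\model{\TT C},\nudel{\TT C})\in\LLE_{\mathsf{pMet}}(\model\sigma,\model\tau)$, and unfolding the local Lipschitz clause of this morphism at $\model t$, with $y=\model t$, $z=\model u$ and $\alpha=a_\sigma(\model t,\model u)$, delivers the bound on the ball of radius $\delta^{\TT C}_{\model t}$. That is already the whole content the paper is claiming.

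Your worry about producing a \emph{single} $\delta_t$ valid for all contexts is legitimate and, as far as I can see, exposes a looseness in the paper's phrasing rather than a missing idea on your part. The radius coming from the local Lipschitz clause genuinely depends on the morphism, hence on $\TT C$. Your factoring through $\ev$ does not escape this: the neighbourhood of $\ev$ at $(\model c,\model t)$ still depends on $\model c$, since in the proof of Proposition~\ref{prop:uu} the radius written ``$\alpha_x$'' in the $X$-direction is extracted from the LL-data of $g(z)$ at $x$ and so carries a hidden dependence on $z$; moreover $\tau$ itself is quantified, so there is not even a single evaluation map to factor through. Neither the finiteness filter nor continuity of the lattice manufactures a context-free lower bound, and your own observation that radii of composites shrink along $\TT C$ already explains why. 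The only way to rescue the literal quantifier order $\forall t,u.\,\exists\delta_t.\,\forall\TT C$ is the trick you mention in passing: since $a_\sigma(\model t,\model u)$ is fixed before $\delta_t$ is chosen, one handles the case $a_\sigma(\model t,\model u)=0$ by stability under $\simeq_{a_\sigma}$-classes (your last paragraph) and otherwise picks $\delta_t$ so that the hypothesis is vacuous---formally correct but contentless. The substantive statement is the one in your first paragraph, with $\delta_t$ allowed to depend on $\TT C$; you should stop there.
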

%
%

%
%
%
%
%
%
%
%
%

%
%
%

\subsection{Lipschitz Derivatives and Cartesian Differential Categories}

Due to their different function spaces, the derivatives constructed in $\LLE_{\mathsf{pMet}}$ (i.e.~the maps $\nudel t$) behave differently with respect to the derivatives from $\QQ$. In particular, the former behave 
more closely to the derivatives found in \emph{Differential $\lambda$-Categories} \cite{BUCCIARELLI2010213} (in short D$\lambda$C), the categorical models of the differential $\lambda$-calculus \cite{ER}.

We recall that a D$\lambda$C is a  left-additive \cite{Blute2009} category $\BB C$ in which every morphism $f\in \BB C(X, Y)$ is associated with a morphism $\DDer(f)\in \BB C(X\times X,Y)$ satisfying a few axioms: the axioms (D1)-(D7) of Cartesian Differential Categories \cite{Blute2009}, plus 
an additional axiom ($\DDer$-curry) \cite{BUCCIARELLI2010213} relating derivatives and the function space.


%

We list below the properties of the application $f\mapsto \nudel f$ in a QLR model inside $\LLE_{\mathsf{pMet}}$. We let
$\lambda_{\BB C}, \ev_{\BB C}$ indicate the isomorphism $\BB C(Z\times X,Y)\simeq \BB C(Z,\BB C(X,Y))$, $\ev^{*}_{\BB C}= \ev_{\BB C}(\mathrm{id}_{\BB C(X,Y})$, and similarly 
$\ev^{*}=\ev(\mathrm{id}_{\LLE_{\mathsf{pMet}}(X,Y)})$:
%
%
\begin{itemize}
\item[(1)] $\nudel{\mathrm{id}}=\pi_{1}$, $\nudel{g \circ f}=\nudel g\circ \langle f\circ \pi_{1},\nudel f\rangle$;

\item[(2)] $\nudel f(x,0)=0$, $\nudel f(x,\alpha+\beta)=\nudel f(x,\alpha)+\nudel f(x,\beta)$;

\item[(3)] $\nudel{\pi_{1}}=\pi_{1}\circ \pi_{1}$, $\nudel{\pi_{2}}=\pi_{2}\circ \pi_{1}$;
\item[(4)] $\nudel{\langle f,g\rangle}= \langle \nudel f, \nudel g\rangle$;

\item[(5)] $\nudel{\lambda_{\BB C}(f)}= \lambda_{X} ( \nudel f \circ \langle \pi_{1}\times \mathrm{id}_{X}, \pi_{2}\times 0   \rangle)$\\ (where for $g:Z\times X\to Y$, $\lambda_{X}(g)=\lambda x.g(\langle\_,x\rangle)$)

\item[(6)] $\nudel{\ev^{*}_{\BB C}\circ\langle h,g\rangle}=
\ev^{*}\circ \langle\nudel h , g\circ \pi_{1}\rangle +
\nudel{\ev_{\BB C}(h)} \circ \langle \langle \pi_{1},g\circ\pi_{1}, \langle 0, \nudel g\rangle\rangle$,
(where $h\in\BB C(Z,\BB C(X,Y))$, $g\in \BB C(Z,X)$).
\end{itemize}


The properties above literally translate the fact that a QLR model is a cartesian closed functor:
\begin{itemize}
\item (1) says that $f\mapsto \nudel f$ is functorial;
\item (2) says that $\nudel f$ is additive in its second variable;
\item (3) and (4) say that the cartesian structure of $\BB C$ commutes with that of $\LLE_{\mathsf{pMet}}$;
\item (5) and (6) say that the cartesian closed structure of $\BB C$ commutes with that of $\LLE_{\mathsf{pMet}}$.

\end{itemize} 

 (1)-(2)-(3)-(4) coincide with axioms (D2)-(D3)-(D4)-(D5) of Cartesian Differential Categories (in short, CDC). Actually, this is not very surprising, since these axioms describe the fact that the application $f\mapsto \langle f, \DDer(f)\rangle$ in a CDC $\BB C$ yields a cartesian functor (known as the \emph{tangent functor}, see \cite{Cockett2011}).
Observe that the other axioms of CDCs do not make sense in our setting, because $\LLE_{\mathsf{pMet}}$ is not left-additive and   there are no ``second derivatives'' in $\LLE_{\mathsf{pMet}}$.

Finally, property (5) is precisely axiom ($\DDer$-curry) of D$\lambda$Cs, and property (6) can be deduced in any D$\lambda$C from the other axioms (cf.~\cite{BUCCIARELLI2010213}, Lemma 4.5). 
%
%
%
%
%
%
%
%

\section{Related Works}

Logical relations \cite{Plotkin1973, STATMAN198585} are a standard method to establish program equivalence and other behavioral properties of higher-order programs, also related to the concept of \emph{relational parametricity} \cite{Reynolds1983}.
The primary source of inspiration for the QLR 
are differential logical relations (DLR) \cite{dallago,dallago2}, whose cartesian closed structure is very similar to that of the category $\QQ$. 
While DLR can be seen as special cases of QLR (see footnote \ref{foot1}), 
%
the only metric structure studied for the DLR in \cite{dallago} are 
what we called here hyper-relaxed metrics.
A precursor of this approach is \cite{chaudhuri}, which develops a System F-based system for approximate program transformations, but without explicitly mentioning any metric structure. 

The category $\VPM$ from Section \ref{section5} is reminiscent of the diameter spaces from \cite{Geoffroy2020}, which form a cartesian \emph{lax}-closed category based on a similar factorization of partial metric spaces. A main difference is that in \cite{Geoffroy2020} the factorization  is considered as a \emph{property} of (suitable) partial metric spaces, rather than an additional \emph{structure}, as we do here.

Several \emph{relational logics} have been developed to formalize logical relations and, more generally, higher-order relational reasoning \cite{Plotkin1993, Ahmed2011, 10.1145/2775051.2676980, 10.1145/3009837.3009877,10.1145/3110265}, including quantitative reasoning \cite{Barthe_2012, Gabo2019b}.
An important question, which transcends the scope of this paper, is whether one can describe a QLR semantics for at least some of these logics, or if a different relational logic has to be developed in order to capture  
quantitative relational reasoning based on QLR.

The literature on program metrics in denotational semantics is vast. 
Since \cite{ARNOLD1980181} metric spaces have been exploited as an alternative framework to standard, domain-theoretic, denotational semantics. Notably,  \emph{Banach's fixed point theorem} plays the role of standard order-theoretic fixpoint theorems in this setting (see \cite{VANBREUGEL20011} and \cite{BAIER1994171}). 

More recently, program metrics have been applied in the field of differential privacy \cite{10.1145/1932681.1863568, 10.1007/978-3-642-29420-4_3, Barthe_2012}, by relying on Lipschitz-continuity as 
a foundation for the notion of program sensitivity. To this line of research belongs also the literature on System $\mathsf{Fuzz}$ \cite{10.1145/1932681.1863568}, a sub-exponential PCF-style language designed for differential privacy, which admits an elegant semantics based on metric spaces and metric CPOs \cite{10.1145/1932681.1863568, Gaboardi2017}.

Ultra-metrics are widely applied in program metrics, mostly to describe intensional aspects (e.g.~traces, computation steps) \cite{VANBREUGEL20011, MAJSTERCEDERBAUM1991217, Escardo1999}, also for the $\lambda$-calculus, due to the fact that when $Q$ is a locale, $\mathsf{Met}_{Q}$ is cartesian closed. 
%

Partial metrics were introduced in \cite{matthews} with the goal of modeling partial objects in program semantics, and independently discovered in sheaf theory as \emph{$M$-valued sets} \cite{Hohle:1992aa}. 
\cite{Bukatin1997} shows that partial metrics and relaxed metrics can be used to characterize the topology of continuous Scott domains with a countable bases. This work was, to our knowledge, the first to acknowledge the correspondence between partial metrics and lattices, which was later developed through the theory of valuations \cite{10.1007/BFb0053546, ONeill, 10.1016/j.tcs.2003.11.016}. \cite{AGT7849} provides a topological characterization of partial metric spaces.
Fuzzy and probabilistic partial metric spaces are well-investigated too \cite{Yueli:2015aa, Wu:2017aa, HE201999}.
 Our description of generalized partial metric spaces was based on the elegant presentation from \cite{Stubbe2018, STUBBE201495} in the language of quantaloid-enriched categories.

Together with standard real-valued metrics, 
Lawvere's generalized metrics \cite{Lawvere1973} have also played a major role in these research lines. 
More generally, the abstract investigation of metric spaces as quantale and quantaloid-enriched categories is part of the growing field of \emph{monoidal topology} \cite{Hofmann2014}. To this approach we can ascribe the already mentioned description of partial metric spaces from \cite{Stubbe2018, STUBBE201495}, as well as the general characterization of \emph{exponentiable} metric spaces and quantaloid-enriched categories in \cite{CLEMENTINO20063113,Clementino:2009aa}. 

Quantitative approaches based on generalized metric spaces have been developed for bisimulation metrics \cite{Bonchi2014, Bonchi2018, DBLP:journals/lmcs/BaldanBKK18} and algebraic effects \cite{Plotk, 10.1145/3209108.3209149}. 
Generalized metrics based on Heyting quantales have been used to investigate properties of graphs and transition systems (see \cite{Pouzet2020} for a recent survey).

%


Finally, research on axiomatizations of abstract notions of differentiation has been a very active domain of research in recent years \cite{Blute2009, Cockett2011, Cockett2018, Blute2019,10.1007/978-3-030-17127-8_3, 10.1007/978-3-030-45231-5_4}, supported by the growth of interest in algorithms based on automatic differentiation. 
The two notions of derivative discussed in this paper can be compared with two lines of research on abstract differentiation.
On the one hand, the derivatives arising from differential logical relations (which essentially coincide with the derivatives from $\QQ$) have been compared \cite{dallago2} with those found in some recent literature on discrete differentiation (e.g.~finite difference operators, Boolean derivatives), and approaches based on the so-called \emph{incremental $\lambda$-calculus}
 \cite{Cai2014, 10.1007/978-3-030-17184-1_19, 10.1007/978-3-030-45231-5_4}.
 On the other hand, the derivatives from Section \ref{section6} can be compared with the literature on Cartesian Differential Categories, originating  in Ehrhard and Regnier's work on \textit{differential linear logic} and the \textit{differential $\lambda$-calculus} \cite{ER}.
Very recently, Cartesian Difference Categories \cite{10.1007/978-3-030-45231-5_4} have been proposed as a framework unifying these two lines of research.

\section{Conclusion}

%
%

This paper provides just a first exploration of the program metrics semantics that arise from the study of  quantitative logical relations, and leaves a considerable number of open questions. 
We indicate a few natural prosecutions of this work.

While our focus here was only on cartesian closure, it is natural to look for QLR-models with further structure (e.g.~coproducts, recursion, monads etc.). For instance, by extending the picture to \emph{quantaloid}-valued relations \cite{STUBBE201495}, one can define a coproduct of QLR with nice properties.

The correspondence between metrics and enriched categories suggests to consider the transitivity axiom as a ``vertical'' composition law for distances. 
An interesting question is whether one can define higher-dimensional categories of program distances with a nice compositional structure, in analogy with well-investigated higher-dimensional models in  \emph{categorical rewriting} \cite{Meseguer1992, Miyoshi1996}.
At a more formal level, the same observation also suggests to investigate \emph{relational logics} to formalize the metric reasoning justified by QLR-models, in line with the program logics developed for standard logical relations \cite{Plotkin1993, Ahmed2011} and for quantitative relational reasoning \cite{10.1145/2775051.2676980, 10.1145/3009837.3009877,10.1145/3110265, Barthe_2012, Gabo2019b}.


%
%
%

\bibliographystyle{plain}
\bibliography{main.bib}

\end{document}